\documentclass[11pt]{article}

\usepackage[a4paper, margin=0.75in]{geometry}
\usepackage{fourier} 
\usepackage{booktabs} 
\usepackage[T1]{fontenc} 
\usepackage[utf8]{inputenc}
\usepackage{authblk} 
\usepackage{footmisc}

\usepackage[table,dvipsnames]{xcolor} 
\definecolor{orcidgreen}{HTML}{85A12C} 
\usepackage{url}
\usepackage[hidelinks]{hyperref} 
\hypersetup{
    colorlinks=true,
    allcolors=orcidgreen,
    linkcolor=Blue,
    citecolor=purple,
}
\usepackage{microtype} 

\usepackage{pdflscape}
\usepackage{multirow}
\usepackage{breakcites}
\usepackage{makecell}
\usepackage{subcaption}
\usepackage{wrapfig}
\usepackage{longtable}

\newcommand{\boldpara}[1]{\medskip\noindent\textbf{#1.}}

\usepackage{fontawesome5}
\usepackage{pifont}
\definecolor{crossred}{RGB}{201, 22, 22}
\definecolor{tickgreen}{RGB}{38, 150, 68}
\newcommand{\yes}{\textcolor{tickgreen}{\ding{51}}}
\newcommand{\no}{\textcolor{crossred}{\ding{55}}}

\usepackage{amssymb}
\usepackage{amsmath}
\usepackage{amsthm}
\usepackage{orcidlink}
\usepackage{academicons}
\usepackage{xspace}
\usepackage{nicefrac}

\newcommand{\ham}{{\ensuremath{\mathrm{ham}}}}
\newcommand{\jacc}{{\ensuremath{\mathrm{jacc}}}}

\newcommand{\ehd}{{\ensuremath{\mathrm{ehd}}}}

\newcommand{\EHD}{{\ensuremath{\mathrm{EHD}}}}
\newcommand{\cen}{{\ensuremath{\mathrm{cen}}}}
\newcommand{\chd}{{\ensuremath{\mathrm{chd}}}}
\newcommand{\CHD}{{\ensuremath{\mathrm{CHD}}}}
\newcommand{\avl}{{\ensuremath{\mathrm{avl}}}}
\newcommand{\revavl}{{\ensuremath{\mathrm{rev\hbox{-}avl}}}}

\newcommand{\satr}{{\ensuremath{\mathrm{satr}}}}
\newcommand{\outdiv}{{\ensuremath{\mathrm{out\hbox{-}div}}}}

\newcommand{\noisy}{{\mathrm{N}}}
\newcommand{\calD}{{{\mathcal{D}}}}
\newcommand{\bernoulli}{\ensuremath{\mathrm{Bernoulli}}}

\newcommand{\pcc}{{\mathrm{pcc}}}

\newcommand{\agr}{a}
\newcommand{\std}{\mathrm{std}}
\newcommand{\cmpl}{\mathrm{cmpl}}
\newcommand{\cntragr}{{\mathrm{cntr\hbox{-}agr}}}
\newcommand{\pairagr}{{\mathrm{pair\hbox{-}agr}}}
\newcommand{\jaccpairagr}{{\mathrm{jacc\hbox{-}agr}}}
\newcommand{\appragr}{{\mathrm{av\hbox{-}agr}}}
\newcommand{\pccagr}{{\mathrm{pcc\hbox{-}agr}}}
\newcommand{\pospccagr}{{\mathrm{pcc^+\hbox{-}agr}}}
\newcommand{\cntrdiv}{{\mathrm{cntr\hbox{-}div}}}
\newcommand{\pccdiv}{{\mathrm{pcc\hbox{-}div}}}
\newcommand{\cntrpol}{{\mathrm{cntr\hbox{-}pol}}}
\newcommand{\pccpol}{{\mathrm{pcc\hbox{-}pol}}}
\newcommand{\pairpol}{{\mathrm{pair\hbox{-}pol}}}

\newcommand{\ID}{\mathrm{ID}}
\newcommand{\IC}{\mathrm{IC}}

\newcommand{\reals}{{\mathbb{R}}}

\usepackage[nameinlink]{cleveref} 

\crefname{claim}{Claim}{Claims}

\usepackage[square]{natbib}
\theoremstyle{plain}
\newtheorem{theorem}{Theorem}[section]
\newtheorem{lemma}[theorem]{Lemma}
\newtheorem{proposition}[theorem]{Proposition}

\newtheorem{definition}{Definition}[section]

\newtheorem{remark}{Remark}[section]

\definecolor{GreyishGrey}{RGB}{114,114,114}
\definecolor{BluishBlue}{RGB}{57,114,167}
\definecolor{GreenishGreen}{RGB}{57,167,114}

\newcommand{\ltbase}[1]{%
  \raisebox{3mm}{\makecell{%
    \edef\smallvalue{\fpeval{round(100*#1/1.75,2)}}%
    \begingroup
      \setlength{\fboxsep}{0pt}%
      \colorbox{GreyishGrey!\smallvalue}{\makebox[5ex][c]{\rule{0pt}{2ex}#1}}%
    \endgroup
  }}%
}
\newcommand{\ltagr}[1]{%
  \raisebox{3mm}{\makecell{%
    \edef\smallvalue{\fpeval{round(100*#1/1.75,2)}}%
    \begingroup
      \setlength{\fboxsep}{0pt}%
      \colorbox{GreenishGreen!\smallvalue}{\makebox[5ex][c]{\rule{0pt}{2ex}#1}}%
    \endgroup
  }}%
}
\newcommand{\ltdiv}[1]{%
  \raisebox{3mm}{\makecell{%
    \edef\smallvalue{\fpeval{round(100*#1/1.75,2)}}%
    \begingroup
      \setlength{\fboxsep}{0pt}%
      \colorbox{BluishBlue!\smallvalue}{\makebox[5ex][c]{\rule{0pt}{2ex}#1}}%
    \endgroup
  }}%
}
\newcommand{\ltpol}[1]{%
  \raisebox{3mm}{\makecell{%
    \edef\smallvalue{\fpeval{round(100*#1/1.75,2)}}%
    \begingroup
      \setlength{\fboxsep}{0pt}%
      \colorbox{red!\smallvalue}{\makebox[5ex][c]{\rule{0pt}{2ex}#1}}%
    \endgroup
  }}%
}
\newcommand{\qqagr}[1]{%
    \pgfmathsetmacro\smallvalue{(100*#1)/1.75}
    \tikz[baseline=0.3ex]{%
        \fill[GreenishGreen!\smallvalue!white!] (-0.1,0ex) rectangle (6.2ex,2ex);
        \node[inner sep=0pt, anchor=center] at (3ex,1ex) {#1};%
    }%
}
\newcommand{\qqdiv}[1]{%
    \pgfmathsetmacro\smallvalue{(100*#1)/1.75}
    \tikz[baseline=0.3ex]{%
        \fill[BluishBlue!\smallvalue!white!] (-0.1,0ex) rectangle (6.2ex,2ex);
        \node[inner sep=0pt, anchor=center] at (3ex,1ex) {#1};%
    }%
}
\newcommand{\qqpol}[1]{%
    \pgfmathsetmacro\smallvalue{(100*#1)/1.75}
    \tikz[baseline=0.3ex]{%
        \fill[red!\smallvalue!white!] (-0.1,0ex) rectangle (6.2ex,2ex);
        \node[inner sep=0pt, anchor=center] at (3ex,1ex) {#1};%
    }%
}
\newcommand{\qqpcor}[1]{%
    \pgfmathsetmacro\smallvalue{(100*#1)/1.75}
    \tikz[baseline=0.3ex]{%
        \fill[GreenishGreen!\smallvalue!white!] (-0.1,0ex) rectangle (6.2ex,2ex);
        \node[inner sep=0pt, anchor=center] at (3ex,1ex) {#1};%
    }%
}
\newcommand{\qqncor}[1]{%
    \pgfmathsetmacro\smallvalue{(100*#1)/1.75}
    \tikz[baseline=0.3ex]{%
        \fill[red!\smallvalue!white!] (-0.1,0ex) rectangle (6.2ex,2ex);
        \node[inner sep=0pt, anchor=center] at (3ex,1ex) {-#1};%
    }%
}

\apptocmd{\sloppy}{\hbadness 10000\relax}{}{}
\apptocmd{\appendix}{\hbadness 10000\relax}{}{}

\title{\huge \textbf{Agreement, Diversity, and Polarization Indices \\ for Approval Elections}}

\author{
  Piotr Faliszewski\quad
  Jitka Mertlová\quad
  Krzysztof Sornat\quad
  Stanisław Szufa\quad
  Tomasz Wąs
}

\date{}

\makeatletter
\renewcommand*{\@fnsymbol}[1]{\ifcase#1\or i\or ii\or iii\or iv\else\@ctrerr\fi}
\makeatother

\begin{document}

\maketitle

\begin{quote}
    \textbf{Abstract:}
    An index is a function that given an election outputs a value
  between $0$ and $1$, indicating the extent to which this election
  has a particular feature. We seek indices that capture agreement,
  diversity, and polarization among voters in approval elections, and
  that are normalized with respect to
  saturation. By the latter we mean that if two elections differ by
  the fraction of candidates approved by an average voter, but
  otherwise are of similar nature, then they should have similar index
  values. We propose several indices, analyze their properties, and
  use them to (a) derive a new map of approval elections, and (b) show
  similarities and differences between various real-life elections
  from Pabulib, Preflib and other sources.
\end{quote}
\begin{quote}
    \textbf{Code:} \url{https://github.com/Project-PRAGMA/Approval-DAP-IJCAI-2026}
\end{quote}

\begingroup
\renewcommand{\thefootnote}{}%
\renewcommand{\theHfootnote}{authorinfo}%
\footnotemark
\footnotetext{%
  \hspace{-20pt}
  Authors' Information:
  \href{https://orcid.org/0000-0002-0332-4364}{Piotr Faliszewski \orcidlink{0000-0002-0332-4364}}, faliszew@agh.edu.pl, AGH University of Kraków, Poland;
  \href{https://orcid.org/0009-0002-7778-0677}{Jitka Mertlová~\orcidlink{0009-0002-7778-0677}}, mertlova.jita@gmail.com, Czech Technical University in Prague, Czech Republic;
  \href{https://orcid.org/0000-0001-7450-4269}{Krzysztof~Sornat~\orcidlink{0000-0001-7450-4269}}, sornat@agh.edu.pl, AGH University of Kraków, Poland;
  \href{https://orcid.org/0000-0001-6301-6227}{Stanisław Szufa \orcidlink{0000-0001-6301-6227}}, s.szufa@gmail.com, University of Geneva, Switzerland;
  \href{https://orcid.org/0000-0003-3492-6584}{Tomasz Wąs \orcidlink{0000-0003-3492-6584}}, tomasz.was@cs.ox.ac.uk, University of Oxford, United Kingdom.
}%
\addtocounter{footnote}{-1}%
\endgroup


\section{Introduction}\label{sec:intro}

Our main goal is to design three functions that given an approval
election---i.e., an election where voters give yes/no answers
regarding their support for particular candidates---indicate the
extent to which the votes 
are in agreement, are diverse, or are polarized. Following
\citet{has-end:c:diversity-indices}, we refer to functions that
quantify election features (by giving values between $0$ and $1$) as
\emph{election indices}.

Such indices would be valuable because they would allow us to
compare approval elections coming from different sources. For example,
it is not uncommon to use participatory budgeting (PB) elections from
Pabulib~\citep{fal-fli-pet-pie-sko-sto-szu-tal:c:pabulib} outside of
the PB context and it is important to validate if this data remains
relevant.\footnote{E.g., for year 2025 Google Scholar points to 18
  papers that cite the Pabulib database and perform numerical
  experiments. Among these, only 12 are indeed focused on
  participatory budgeting.}  Further, as recently shown by
\citet{fal-mer-nun-szu-was:c:diff-sizes}, for ordinal elections (where the voters
rank the candidates) having good agreement,
diversity, and polarization indices leads to appealing maps of
elections
(i.e., visualizations of election datasets where each
election is a dot and distances between these dots give an idea of
their elections' similarity).  While a map of approval elections
exists~\citep{szu-fal-jan-lac-sli-sor-tal:c:sampling-approval-elections},
it seems to be less useful than those for ordinal ones (see the
argument below).

One characteristic issue when dealing with approval elections
is that they may largely differ in terms of their \emph{saturation},
i.e., the average number of candidates approved by a voter, while
being very similar in nature. As an extreme example, consider one
election where all voters approve the same 10\% of candidates, and a
second one, where all voters approve the same 50\% of
candidates. Both of these elections represent perfect agreement (and
complete lack of diversity and polarization) but according to the map
of \citet{szu-fal-jan-lac-sli-sor-tal:c:sampling-approval-elections},
they are quite far from each other. Indeed,
that map has two main dimensions, election saturation and the position
on the agreement-disagreement spectrum.  We seek
saturation-independent agreement, diversity, and polarization indices,
so that a map based on them would be capable of discovering more
nuanced relations between data.

Our general strategy
is to focus on designing a saturation-independent agreement index.
Given such an index, we derive
a diversity one
by clustering the voters into as like-minded groups as possible and claiming
that the election is diverse if the agreement within these groups is
low. For polarization--which we define as the presence of two groups
with conflicting preferences---we take the difference between the
overall agreement and the agreements in two voter groups after
clustering (the overall agreement in a polarized election would be
low, but it would be high in the like-minded groups).  This approach
was proposed by \citet{fal-kac-sor-szu-was:c:div-agr-pol-map} in the
ordinal case,
for the special case of measuring agreement
using the Kemeny rule~\cite{kem:j:no-numbers}. 
We extend their idea.

Altogether, our contributions are as follows. First, we propose a
number of agreement indices and analyze them theoretically and
experimentally, paying careful attention to saturation independence.
In particular, we verify if their results are intuitive on a number of
selected election instances. Then, using these indices, we derive and
test diversity and polarization ones (using both the general scheme
and some other ideas). Finally, we identify three indices---capturing
agreement, diversity, and polarization---that work best together in a
certain measurable way, and use them to form a map of approval
elections. Our map contains both synthetic and real-life elections,
e.g., coming from
Preflib~\citep{mat-wal:c:preflib} and
Pabulib~\citep{fal-fli-pet-pie-sko-sto-szu-tal:c:pabulib}, 
but also from other sources,
including some new ones.

We discuss related work throughout the paper, in relevant
places. 
Omitted proofs and discussions are in the appendix.

\section{Preliminaries}\label{sec:prelim}

For an integer $t$, we write $[t]$ to denote $\{1, \ldots, t\}$.  For
a vector $x \in \reals^t$, we write $x[i]$ to denote its $i$-th
coordinate, and $\overline{x} = \frac{1}{t}(x[1] + \ldots + x[t])$ to
denote its mean.  For $p \in [0,1]$, $x \sim \bernoulli(p)$ indicates
that $x$ gets value $1$ with probability $p$ and value $0$ with
probability $1-p$.

\paragraph{Elections.}
An \emph{(approval) election} is a pair $E = (C,V)$, where
$C = \{c_1, \ldots, c_m\}$ is a set of \emph{candidates} and
$V = (v_1, \ldots, v_n)$ is a collection of \emph{voters}.  Each voter
has an \emph{approval ballot} (or, a \emph{vote}) where he or she
indicates which candidates he or she approves
(and he or she disapproves the other ones). 
Formally, we represent a vote as a binary vector, whose
entries correspond to the candidates and have value $1$ if a given
candidate is approved and $0$ otherwise. To simplify notation, we
write $v_i$ to refer both to the respective voter and his or her vote;
the exact meaning will always be clear from the context.
Additionally, for a voter $v_i$ we write $A(v_i)$ to denote the set of
candidates that this voter approves, and for a candidate $c_j$ we
write $A(c_j)$ to mean the set of voters approving $c_j$.
Given an election
$E = (C,V)$, by $E^{-1}$ we mean an election obtained from $E$ by
reversing all entries of all votes (i.e., for each voter $v_i$ and
candidate $c_j$, in $E^{-1}$, for $c_j$ this voter reports
$1-v_i[j]$). We refer to $E^{-1}$ as a reverse of $E$.

\paragraph{Pearson Correlation Coefficient.}
Given two $t$-di\-men\-sion\-al vectors, $x, y \in \reals^t$, their Pearson
correlation coefficient (PCC) is:
\begin{equation}
  \pcc(x,y) = \frac{\sum_{i=1}^t \big(x[i]-\overline{x}\big)\big(y[i]-\overline{y}\big)}{\sqrt{\sum_{i=1}^t\big(x[i]-\overline{x}\big)^2}\sqrt{\sum_{i=1}^t\big(y[i]-\overline{y}\big)^2}}. \label{eq:pcc}
\end{equation}
$\pcc(x,y)$ takes values between $1$ and $-1$, where $1$ means perfect
correlation (positive linear dependence between the entries of $x$ and
$y$), $-1$ means perfect anti-correlation (negative linear dependence
between the entries of $x$ and $y$), and $0$ means a complete lack of
linear correlation.
In rare cases in which in $x$ or $y$
contains only $1$ or only $0$,
the denominator would be equal to $0$,
so we fix the value of $\pcc(x,y)$ as $1$.

\paragraph{(Dis)similarity Measures Between Votes.}
Given two votes, $u$ and $v$, over $m$ candidates,
their Hamming distance is 
\[
    \ham(u,v) = \textstyle \sum_{i \in [m]}|u[i]-v[i]|
\]
and their Jaccard distances is
\[
    \jacc(u,v) =  1-\frac{\sum_{i \in [m]}\min(u[i], v[i])}{\sum_{i \in [m]}\max(u[i],v[i])}.
\]
Hamming distance takes values between $0$ and $m$, whereas Jaccard
takes values between $0$ and $1$. We also sometimes measure similarity
between two votes using Pearson correlation coefficient,
which in the case of binary vectors is known as the phi coefficient~\cite{yul:j:phi-coefficient},
or Matthews coefficient (MCC)~\cite{bal-bru-cha-and-nie:j:classification}.
There are many other (dis)similarity measures between binary vectors
(see, e.g., the discussion of
\citet{dua-str-liu-xu-wu:j:binary-vector-similarity}) but we restrict
our attention to these three as they are common in voting literature.

\section{Election Data}
\label{sec:election_data}
We test our election indices on three kinds of data: 
Special, fixed elections of
particular, well-understood structure, synthetically generated random
elections, and elections obtained from various real-life data (ranging
from actual approval elections to those
derived from other types of preferences).
In the discussion below we consider elections with candidate set
$C = \{c_1, \ldots, c_m\}$ and voter collection
$V = (v_1, \ldots, v_n)$.

\boldpara{Special Elections} Our special elections are:
\begin{description}
\item[$\boldsymbol{p}$-Identity ($\boldsymbol{p}$-ID).] In a $p$-Identity election,
  all voters approve the same set of $\lfloor pm \rfloor$ candidates (e.g., $c_1, \ldots, c_{\lfloor pm \rfloor}$) and
  disapprove the remaining ones.
\item[$\boldsymbol{k}$-Party.] In a $k$-Party election there are $k$
  disjoint, equal-sized groups of candidates ($\pm 1$, depending on
  divisibility) and $k$ disjoint, equal-sized groups of voters
  ($\pm 1$, depending on divisibility) . For each $i \in [k]$, the
  voters in the $i$-th group approve exactly the candidates in the
  $i$-th group. For $x, y \in [0,1]$, by $(x, y)$-$2$-Party we mean a $2$-Party
  election where the first group includes an $x$ fraction of candidates and $y$ fraction of
  voters, and the second group includes the remaining ones.
\end{description}
For the next three elections we require $n=m$:
\begin{description}
\item[Diagonal.] This is an $m$-Party election.
  \item[Triangle.] For each $i \in [n]$, the $i$-th voter approves the first $i$ candidates, i.e., $c_1, \dots, c_i$.
  \item[Cyclic.] The first voter approves the first candidate and
    last $\lfloor m/2 \rfloor-1$ candidates. For each $i \in \{2, \dots n\}$, we obtain the
    $i$-th vote by taking the $(i-1)$-st one and shifting it
    cyclically by one position to the right, i.e.,
    replacing each candidate $c_j$ by $c_{j+1}$,
    with $c_m$ mapped to $c_1$.
\end{description}

\boldpara{Synthetic Elections} Next let us describe our models of
generating random elections. If we only describe a process of sampling
a single vote, it means that all the votes in the election are generated
this way, independently from each other:
\begin{description}
\item[$\boldsymbol{p}$-Impartial Culture ($\boldsymbol{p}$-IC).] For $p \in [0,1]$, we
  sample vote $v$ so that for each $i \in [m]$, we have $v[i] \sim \bernoulli(p)$.

\item[$\boldsymbol{(p_1, \ldots, p_m)}$-Independent Approval Model
  (IAM).]~For $(p_1, \ldots, p_m) \in [0,1]^m$, we sample vote $v$ so
  that for each $i$ $\in [m]$, $v[i] \sim \bernoulli(p_i)$.
  IAMs were discussed, e.g., by~\citet{fal-jan-kac-kur-pie-szu:c:approval-learning},
  \citet{lac-mal:j:shortlisting}, and \citet{xia:t:linear}.

\item[$\boldsymbol{(p,\phi)}$-Resampling.] Given $p, \phi \in [0,1]$
  and a fixed central vote $u$ with $|A(u)| = \lfloor pm\rfloor$, we
  sample vote $v$ as follows: For each $i \in [m]$, with probability
  $1-\phi$ we let $v[i] = u[i]$ and with probability $\phi$ we let
  $v[i] \sim \text{Bernoulli}(p)$. This model is due to
  \citet{szu-fal-jan-lac-sli-sor-tal:c:sampling-approval-elections}.
  Some authors, such as
  \citet{fal-jan-kac-kur-pie-szu:c:approval-learning},
  allow arbitrary central votes, but we use the original variant
  as it gives $pm$ approved candidates in expectation.

  \item[Euclidean.]
  Here, we assume that each voter and candidate is associated with a point in a $d$-dimensional Euclidean space,
  and each voter $v_i$ approves a candidate $c$
  if the distance between their corresponding points
  is smaller than some threshold value $r_i$
  (which can be different for each voter).
  Such a model was studied,
  e.g., by~\citet{god-bat-sko-fal:c:2d}.
  See \Cref{app:dataset} for details.
\end{description}
We also consider a few special distributions, to generate elections
with particular features:
\begin{description}
\item[$\boldsymbol{p}$-ID/IC.] The first half of the voters form a $p$-ID election
  and the second half is generated using the $p$-IC model.
  
\item[Lin-IC.] In this model, $v_1$ is generated using $1$-IC, $v_2$
  using $(1-\nicefrac{1}{n})$-IC, $v_3$ using
  $(1-\nicefrac{2}{n})$-IC, and so on, until $v_n$, generated using
  $\nicefrac{1}{n}$-IC.

\item[$\boldsymbol{\noisy(\calD,\phi)}$.] $\calD$ is a distribution
  over elections. We first generate an election using $\calD$ and then
  we replace each vote $v$ with a new one as follows: We let $p$ be
  the fraction of candidates approved in $v$ and we generate a vote
  using $(p,\phi)$-resampling model with central vote equal to~$v$. In
  other words, $\noisy(\calD,\phi)$ introduces noise on top of
  elections generated using $\calD$ ($\noisy$  stands
  for \emph{noisy}).
    
\end{description}

\boldpara{Real-Life Data}
See \Cref{sec:experiments} and \Cref{app:dataset}.

\section{Indices and Initial Experiments}

By an election index, we mean a function $f$ that given an election
$E$ outputs a value between $0$ and $1$,
quantifying the extent to which $E$ has a feature captured by $f$.
In the following sections we define agreement, diversity, and
polarization indices. For now, we discuss indices on a high level and
describe our approach to evaluating them.

\paragraph{Normalization.}
Given an election $E = (C,V)$, we consider its average vote length
($\avl$; the average number of candidates approved by a voter),
reverse $\avl$ (the average number of candidates disapproved by a
voter), and its saturation ($\satr$; the fraction of candidates that
an average voter approves):
\begin{align*}  
  &\avl(E) =\textstyle \frac{1}{|V|}\sum_{v \in V}|A(v)|,&
  &\revavl(E) = \avl(E^{-1}),&
  &\satr(E) = \avl(E) / |C|.
\end{align*}
While saturation is a natural election index by itself, for the other
indices we are mostly interested in normalizing them in a way that
would be independent of saturation. In other words, if two elections
are similar in nature except that their saturations differ (as would
be the case, e.g., for $0.5$-IC and $0.25$-IC elections), we would
prefer an election index to give them similar values. Naturally, this
is not a precise mathematical statement, but 
rather an intuitive desideratum. 
Yet, in some cases it is easy to argue that a particular
index is not saturation-independent, and we often view it as an
argument against this index.

\newcommand{\resimg}[2]{\begin{tabular}{c}
  \small                          
  \ \ #1\\
\includegraphics[width=1.65cm]{img/agr-resampling/#2\_60\_60.png}
\end{tabular}}

\paragraph{Resampling Experiment.}

\begin{figure}
  \centering
  \vspace{-6mm}
  
  \!\!\!\!\!\begin{tikzpicture}
  \node[inner sep=0] (img) 
    {\includegraphics[width=0.35\linewidth]{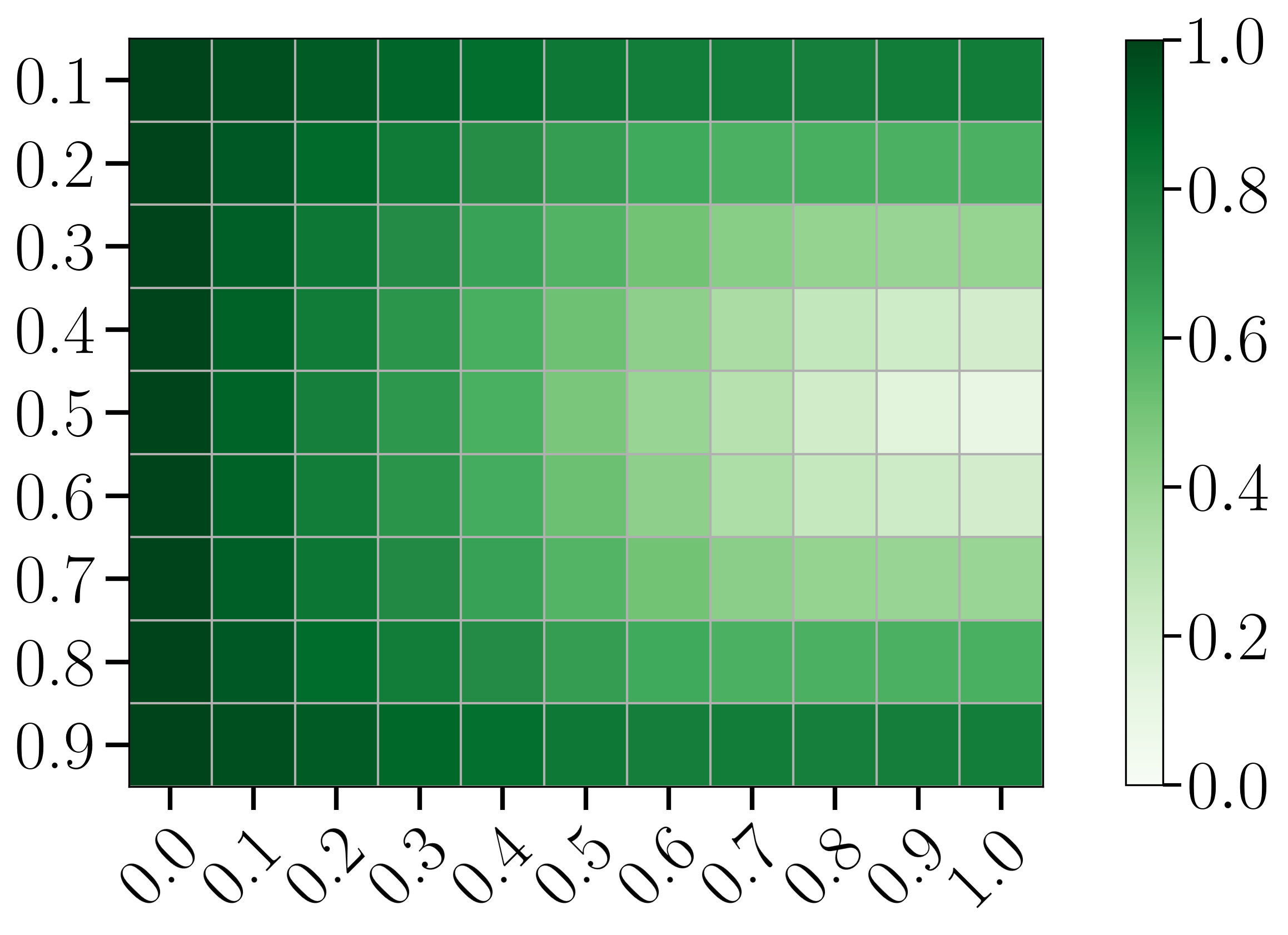}};

  \node[left] at (img.west) {$p$};

  \node[below] at (img.south) {$\phi$};
\end{tikzpicture}
  \caption{\label{fig:resampling}Results of the resampling experiment for Approval Agreement
    index. Rows correspond to fixed $p$ values and columns to fixed $\phi$ values.}  \vspace{-3mm}
\end{figure}

For each of our indices, we perform the following \emph{resampling
  experiment}: {For each $p \in \{0.1, \ldots, 0.9\}$ and each
  $\phi \in \{0, 0.1, \ldots, 1\}$, we generate $10$
  $(p,\phi)$-resampling elections with $60$ candidates and $60$
  voters, and compute for them the average value of the index in
  question.} We arrange the results as a matrix, where each row
corresponds to a fixed value of $p$ and each column corresponds to a
fixed value of $\phi$.  Color brightness represents the average value of the index.
We present such matrices in the first row of
\Cref{tab:index-values}, and we provide an enlarged example (including axis labels and the color-bar scale) in
\Cref{fig:resampling}. The important feature of these plots is that each
column corresponds to a fixed value of $\phi$,
while saturation (i.e., $p$) varies across rows.
Hence, if the columns of the
matrix are not (approximately) uniformly colored, then we claim that a given index is
not saturation-independent (as, arguably, resampling elections with
different values of $p$ but the same value of $\phi$ are of the same
nature). For such an index, we say that it \emph{fails the resampling
test of saturation independence}.

\paragraph{Index Values on Example Elections.}
In \Cref{tab:index-values} we also give values of our indices on a
number of characteristic (distributions over) elections. Each row
gives index values for a given family of elections. All elections have
$60$ candidates and $60$ voters and the results for each family are
averaged over sampling $10$ elections (we report standard deviation in
smaller font). For each family we give its name and an icon presenting
an example election; in this icon, columns are the candidates, rows
are the voters, and darker spots indicate that a given voter approves
a given candidate.
We discuss this table throughout the following sections.

\newcommand{\rsmpl}[1]{\includegraphics[width=1.125cm]{img/agr-resampling/#1\_60\_60.png}}
\newcommand{\resamplingexperiment}{
  \multicolumn{3}{r}{ \raisebox{3mm}{resampling experiment:}} &
  \rsmpl{app\_agr} &
  \rsmpl{cntr\_agr} &
  \rsmpl{pair\_agr} &
  \rsmpl{pcc\_agr} &
  \rsmpl{jacc\_agr} &
  \rsmpl{pccplus\_agr} &
  \rsmpl{cntr-div} &
  \rsmpl{pcc-div} &
  \rsmpl{out-div} &
  \rsmpl{cntr-pol}&
  \rsmpl{pcc-pol} &
  \rsmpl{ham-std} 
  \\
  }

\begin{table*}[t!]
    \small
    \setlength{\tabcolsep}{3pt}
    \scalebox{0.86}{ 
    \small
    \setlength{\tabcolsep}{3pt}
    \begin{tabular}{lclcccccc|ccc|ccc}
    \toprule
&&& \multicolumn{1}{p{0.8cm}}{\rotatebox{53}{$\appragr$}} & \multicolumn{1}{p{0.8cm}}{\rotatebox{53}{$\cntragr$}} & \multicolumn{1}{p{0.8cm}}{\rotatebox{53}{$\pairagr$}} & \multicolumn{1}{p{0.8cm}}{\rotatebox{53}{$\pccagr$}} & \multicolumn{1}{p{0.8cm}}{\rotatebox{53}{$\jaccpairagr$}} & \multicolumn{1}{p{0.8cm}}{\rotatebox{53}{$\pospccagr$}} & \multicolumn{1}{p{0.8cm}}{\rotatebox{53}{cntr-div}} & \multicolumn{1}{p{0.8cm}}{\rotatebox{53}{$\mathrm{pcc}\hbox{-}\mathrm{div}$}} & \multicolumn{1}{p{0.8cm}}{\rotatebox{53}{out-div}} & \multicolumn{1}{p{0.8cm}}{\rotatebox{53}{cntr-pol}} & \multicolumn{1}{p{0.8cm}}{\rotatebox{53}{$\mathrm{pcc}\hbox{-}\mathrm{pol}$}} & \multicolumn{1}{p{0.8cm}}{\rotatebox{53}{$\mathrm{pair}\hbox{-}\mathrm{pol}$}} \\
\midrule
    \resamplingexperiment
    \multicolumn{3}{r}{saturation independence:} &
    \no & \no & \yes & \yes & \no & \yes &
    \no & \yes & \yes &
    -- & -- & -- \\
    \midrule
\raisebox{3mm}{1.} & \raisebox{3mm}{$\nicefrac{1}{3}$-ID} &\includegraphics[width=0.76cm]{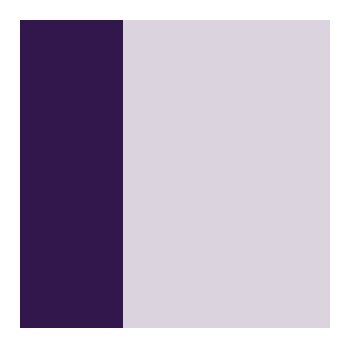}& \raisebox{2mm}{\makecell{\qqagr{1.00}\\ \raisebox{1mm}{\tiny $\pm 0.00$}}}& \raisebox{2mm}{\makecell{\qqagr{1.00}\\ \raisebox{1mm}{\tiny $\pm 0.00$}}}& \raisebox{2mm}{\makecell{\qqagr{1.00}\\ \raisebox{1mm}{\tiny $\pm 0.00$}}}& \raisebox{2mm}{\makecell{\qqagr{1.00}\\ \raisebox{1mm}{\tiny $\pm 0.00$}}}& \raisebox{2mm}{\makecell{\qqagr{1.00}\\ \raisebox{1mm}{\tiny $\pm 0.00$}}}& \raisebox{2mm}{\makecell{\qqagr{1.00}\\ \raisebox{1mm}{\tiny $\pm 0.00$}}}& \raisebox{2mm}{\makecell{\qqdiv{0.00}\\ \raisebox{1mm}{\tiny $\pm 0.00$}}}& \raisebox{2mm}{\makecell{\qqdiv{0.00}\\ \raisebox{1mm}{\tiny $\pm 0.00$}}}& \raisebox{2mm}{\makecell{\qqdiv{0.00}\\ \raisebox{1mm}{\tiny $\pm 0.00$}}}& \raisebox{2mm}{\makecell{\qqpol{0.00}\\ \raisebox{1mm}{\tiny $\pm 0.00$}}}& \raisebox{2mm}{\makecell{\qqpol{0.00}\\ \raisebox{1mm}{\tiny $\pm 0.00$}}}& \raisebox{2mm}{\makecell{\qqpol{0.00}\\ \raisebox{1mm}{\tiny $\pm 0.00$}}}\\[-3pt]
\midrule
\raisebox{3mm}{2.} & \raisebox{3mm}{2-Party} &\includegraphics[width=0.76cm]{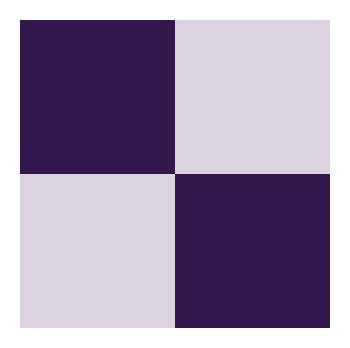}& \raisebox{2mm}{\makecell{\qqagr{0.00}\\ \raisebox{1mm}{\tiny $\pm 0.00$}}}& \raisebox{2mm}{\makecell{\qqagr{0.00}\\ \raisebox{1mm}{\tiny $\pm 0.00$}}}& \raisebox{2mm}{\makecell{\qqagr{0.00}\\ \raisebox{1mm}{\tiny $\pm 0.00$}}}& \raisebox{2mm}{\makecell{\qqagr{0.00}\\ \raisebox{1mm}{\tiny $\pm 0.00$}}}& \raisebox{2mm}{\makecell{\qqagr{0.50}\\ \raisebox{1mm}{\tiny $\pm 0.00$}}}& \raisebox{2mm}{\makecell{\qqagr{0.50}\\ \raisebox{1mm}{\tiny $\pm 0.00$}}}& \raisebox{2mm}{\makecell{\qqdiv{0.20}\\ \raisebox{1mm}{\tiny $\pm 0.00$}}}& \raisebox{2mm}{\makecell{\qqdiv{0.25}\\ \raisebox{1mm}{\tiny $\pm 0.02$}}}& \raisebox{2mm}{\makecell{\qqdiv{0.10}\\ \raisebox{1mm}{\tiny $\pm 0.00$}}}& \raisebox{2mm}{\makecell{\qqpol{1.00}\\ \raisebox{1mm}{\tiny $\pm 0.00$}}}& \raisebox{2mm}{\makecell{\qqpol{1.00}\\ \raisebox{1mm}{\tiny $\pm 0.00$}}}& \raisebox{2mm}{\makecell{\qqpol{1.00}\\ \raisebox{1mm}{\tiny $\pm 0.00$}}}\\[-3pt]
\raisebox{3mm}{3.} & \raisebox{3mm}{N(2-Party, 0.6)} &\includegraphics[width=0.76cm]{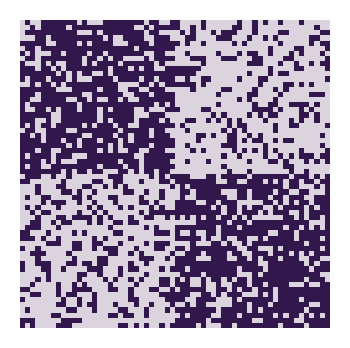}& \raisebox{2mm}{\makecell{\qqagr{0.09}\\ \raisebox{1mm}{\tiny $\pm 0.01$}}}& \raisebox{2mm}{\makecell{\qqagr{0.08}\\ \raisebox{1mm}{\tiny $\pm 0.01$}}}& \raisebox{2mm}{\makecell{\qqagr{0.01}\\ \raisebox{1mm}{\tiny $\pm 0.00$}}}& \raisebox{2mm}{\makecell{\qqagr{0.02}\\ \raisebox{1mm}{\tiny $\pm 0.00$}}}& \raisebox{2mm}{\makecell{\qqagr{0.35}\\ \raisebox{1mm}{\tiny $\pm 0.01$}}}& \raisebox{2mm}{\makecell{\qqagr{0.10}\\ \raisebox{1mm}{\tiny $\pm 0.01$}}}& \raisebox{2mm}{\makecell{\qqdiv{0.66}\\ \raisebox{1mm}{\tiny $\pm 0.02$}}}& \raisebox{2mm}{\makecell{\qqdiv{0.82}\\ \raisebox{1mm}{\tiny $\pm 0.01$}}}& \raisebox{2mm}{\makecell{\qqdiv{0.29}\\ \raisebox{1mm}{\tiny $\pm 0.00$}}}& \raisebox{2mm}{\makecell{\qqpol{0.25}\\ \raisebox{1mm}{\tiny $\pm 0.10$}}}& \raisebox{2mm}{\makecell{\qqpol{0.17}\\ \raisebox{1mm}{\tiny $\pm 0.01$}}}& \raisebox{2mm}{\makecell{\qqpol{0.24}\\ \raisebox{1mm}{\tiny $\pm 0.01$}}}\\[-3pt]
\raisebox{3mm}{4.} & \raisebox{3mm}{3-Party} &\includegraphics[width=0.76cm]{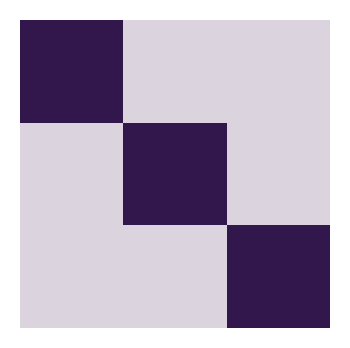}& \raisebox{2mm}{\makecell{\qqagr{0.33}\\ \raisebox{1mm}{\tiny $\pm 0.00$}}}& \raisebox{2mm}{\makecell{\qqagr{0.00}\\ \raisebox{1mm}{\tiny $\pm 0.00$}}}& \raisebox{2mm}{\makecell{\qqagr{0.00}\\ \raisebox{1mm}{\tiny $\pm 0.00$}}}& \raisebox{2mm}{\makecell{\qqagr{0.00}\\ \raisebox{1mm}{\tiny $\pm 0.00$}}}& \raisebox{2mm}{\makecell{\qqagr{0.33}\\ \raisebox{1mm}{\tiny $\pm 0.00$}}}& \raisebox{2mm}{\makecell{\qqagr{0.33}\\ \raisebox{1mm}{\tiny $\pm 0.00$}}}& \raisebox{2mm}{\makecell{\qqdiv{0.33}\\ \raisebox{1mm}{\tiny $\pm 0.00$}}}& \raisebox{2mm}{\makecell{\qqdiv{0.31}\\ \raisebox{1mm}{\tiny $\pm 0.01$}}}& \raisebox{2mm}{\makecell{\qqdiv{0.13}\\ \raisebox{1mm}{\tiny $\pm 0.00$}}}& \raisebox{2mm}{\makecell{\qqpol{0.33}\\ \raisebox{1mm}{\tiny $\pm 0.00$}}}& \raisebox{2mm}{\makecell{\qqpol{0.50}\\ \raisebox{1mm}{\tiny $\pm 0.00$}}}& \raisebox{2mm}{\makecell{\qqpol{0.63}\\ \raisebox{1mm}{\tiny $\pm 0.00$}}}\\[-3pt]
\raisebox{3mm}{5.} & \raisebox{3mm}{4-Party} &\includegraphics[width=0.76cm]{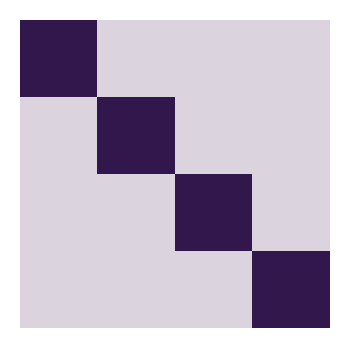}& \raisebox{2mm}{\makecell{\qqagr{0.50}\\ \raisebox{1mm}{\tiny $\pm 0.00$}}}& \raisebox{2mm}{\makecell{\qqagr{0.00}\\ \raisebox{1mm}{\tiny $\pm 0.00$}}}& \raisebox{2mm}{\makecell{\qqagr{0.00}\\ \raisebox{1mm}{\tiny $\pm 0.00$}}}& \raisebox{2mm}{\makecell{\qqagr{0.00}\\ \raisebox{1mm}{\tiny $\pm 0.00$}}}& \raisebox{2mm}{\makecell{\qqagr{0.25}\\ \raisebox{1mm}{\tiny $\pm 0.00$}}}& \raisebox{2mm}{\makecell{\qqagr{0.25}\\ \raisebox{1mm}{\tiny $\pm 0.00$}}}& \raisebox{2mm}{\makecell{\qqdiv{0.47}\\ \raisebox{1mm}{\tiny $\pm 0.04$}}}& \raisebox{2mm}{\makecell{\qqdiv{0.40}\\ \raisebox{1mm}{\tiny $\pm 0.00$}}}& \raisebox{2mm}{\makecell{\qqdiv{0.15}\\ \raisebox{1mm}{\tiny $\pm 0.00$}}}& \raisebox{2mm}{\makecell{\qqpol{0.25}\\ \raisebox{1mm}{\tiny $\pm 0.00$}}}& \raisebox{2mm}{\makecell{\qqpol{0.33}\\ \raisebox{1mm}{\tiny $\pm 0.00$}}}& \raisebox{2mm}{\makecell{\qqpol{0.43}\\ \raisebox{1mm}{\tiny $\pm 0.00$}}}\\[-3pt]
\raisebox{3mm}{6.} & \raisebox{3mm}{$(\nicefrac{1}{3}, \nicefrac{1}{3})$-2-Party} &\includegraphics[width=0.76cm]{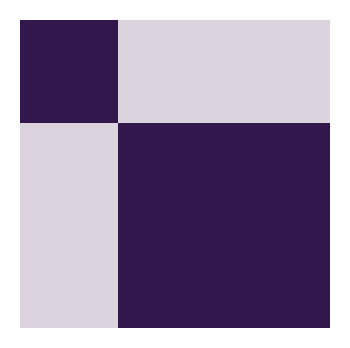}& \raisebox{2mm}{\makecell{\qqagr{0.33}\\ \raisebox{1mm}{\tiny $\pm 0.00$}}}& \raisebox{2mm}{\makecell{\qqagr{0.24}\\ \raisebox{1mm}{\tiny $\pm 0.00$}}}& \raisebox{2mm}{\makecell{\qqagr{0.10}\\ \raisebox{1mm}{\tiny $\pm 0.00$}}}& \raisebox{2mm}{\makecell{\qqagr{0.11}\\ \raisebox{1mm}{\tiny $\pm 0.00$}}}& \raisebox{2mm}{\makecell{\qqagr{0.56}\\ \raisebox{1mm}{\tiny $\pm 0.00$}}}& \raisebox{2mm}{\makecell{\qqagr{0.56}\\ \raisebox{1mm}{\tiny $\pm 0.00$}}}& \raisebox{2mm}{\makecell{\qqdiv{0.15}\\ \raisebox{1mm}{\tiny $\pm 0.00$}}}& \raisebox{2mm}{\makecell{\qqdiv{0.18}\\ \raisebox{1mm}{\tiny $\pm 0.00$}}}& \raisebox{2mm}{\makecell{\qqdiv{0.09}\\ \raisebox{1mm}{\tiny $\pm 0.00$}}}& \raisebox{2mm}{\makecell{\qqpol{0.76}\\ \raisebox{1mm}{\tiny $\pm 0.00$}}}& \raisebox{2mm}{\makecell{\qqpol{0.89}\\ \raisebox{1mm}{\tiny $\pm 0.00$}}}& \raisebox{2mm}{\makecell{\qqpol{0.99}\\ \raisebox{1mm}{\tiny $\pm 0.00$}}}\\[-3pt]
\raisebox{3mm}{7.} & \raisebox{3mm}{Cyclic} &\includegraphics[width=0.76cm]{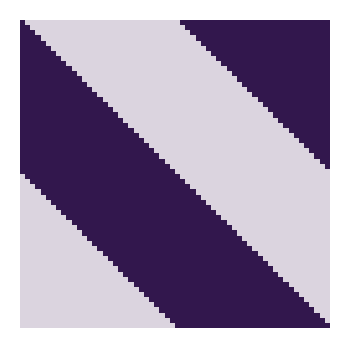}& \raisebox{2mm}{\makecell{\qqagr{0.00}\\ \raisebox{1mm}{\tiny $\pm 0.00$}}}& \raisebox{2mm}{\makecell{\qqagr{0.00}\\ \raisebox{1mm}{\tiny $\pm 0.00$}}}& \raisebox{2mm}{\makecell{\qqagr{0.00}\\ \raisebox{1mm}{\tiny $\pm 0.00$}}}& \raisebox{2mm}{\makecell{\qqagr{0.00}\\ \raisebox{1mm}{\tiny $\pm 0.00$}}}& \raisebox{2mm}{\makecell{\qqagr{0.39}\\ \raisebox{1mm}{\tiny $\pm 0.00$}}}& \raisebox{2mm}{\makecell{\qqagr{0.25}\\ \raisebox{1mm}{\tiny $\pm 0.00$}}}& \raisebox{2mm}{\makecell{\qqdiv{0.46}\\ \raisebox{1mm}{\tiny $\pm 0.00$}}}& \raisebox{2mm}{\makecell{\qqdiv{0.57}\\ \raisebox{1mm}{\tiny $\pm 0.01$}}}& \raisebox{2mm}{\makecell{\qqdiv{0.22}\\ \raisebox{1mm}{\tiny $\pm 0.00$}}}& \raisebox{2mm}{\makecell{\qqpol{0.50}\\ \raisebox{1mm}{\tiny $\pm 0.00$}}}& \raisebox{2mm}{\makecell{\qqpol{0.33}\\ \raisebox{1mm}{\tiny $\pm 0.00$}}}& \raisebox{2mm}{\makecell{\qqpol{0.58}\\ \raisebox{1mm}{\tiny $\pm 0.00$}}}\\[-3pt]
\midrule
\raisebox{3mm}{8.} & \raisebox{3mm}{Diagonal} &\includegraphics[width=0.76cm]{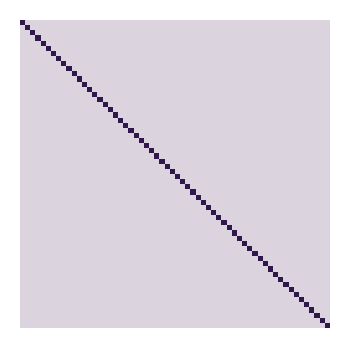}& \raisebox{2mm}{\makecell{\qqagr{0.97}\\ \raisebox{1mm}{\tiny $\pm 0.00$}}}& \raisebox{2mm}{\makecell{\qqagr{0.00}\\ \raisebox{1mm}{\tiny $\pm 0.00$}}}& \raisebox{2mm}{\makecell{\qqagr{0.00}\\ \raisebox{1mm}{\tiny $\pm 0.00$}}}& \raisebox{2mm}{\makecell{\qqagr{0.00}\\ \raisebox{1mm}{\tiny $\pm 0.00$}}}& \raisebox{2mm}{\makecell{\qqagr{0.02}\\ \raisebox{1mm}{\tiny $\pm 0.00$}}}& \raisebox{2mm}{\makecell{\qqagr{0.02}\\ \raisebox{1mm}{\tiny $\pm 0.00$}}}& \raisebox{2mm}{\makecell{\qqdiv{0.97}\\ \raisebox{1mm}{\tiny $\pm 0.00$}}}& \raisebox{2mm}{\makecell{\qqdiv{0.97}\\ \raisebox{1mm}{\tiny $\pm 0.00$}}}& \raisebox{2mm}{\makecell{\qqdiv{0.63}\\ \raisebox{1mm}{\tiny $\pm 0.01$}}}& \raisebox{2mm}{\makecell{\qqpol{0.02}\\ \raisebox{1mm}{\tiny $\pm 0.00$}}}& \raisebox{2mm}{\makecell{\qqpol{0.02}\\ \raisebox{1mm}{\tiny $\pm 0.00$}}}& \raisebox{2mm}{\makecell{\qqpol{0.01}\\ \raisebox{1mm}{\tiny $\pm 0.00$}}}\\[-3pt]
\raisebox{3mm}{9.} & \raisebox{3mm}{Triangle} &\includegraphics[width=0.76cm]{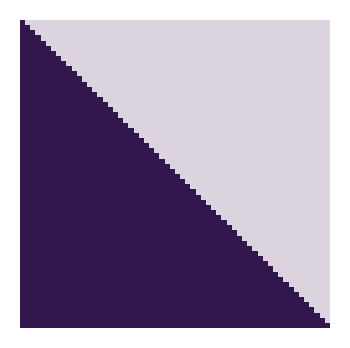}& \raisebox{2mm}{\makecell{\qqagr{0.50}\\ \raisebox{1mm}{\tiny $\pm 0.00$}}}& \raisebox{2mm}{\makecell{\qqagr{0.49}\\ \raisebox{1mm}{\tiny $\pm 0.00$}}}& \raisebox{2mm}{\makecell{\qqagr{0.33}\\ \raisebox{1mm}{\tiny $\pm 0.00$}}}& \raisebox{2mm}{\makecell{\qqagr{0.50}\\ \raisebox{1mm}{\tiny $\pm 0.00$}}}& \raisebox{2mm}{\makecell{\qqagr{0.51}\\ \raisebox{1mm}{\tiny $\pm 0.00$}}}& \raisebox{2mm}{\makecell{\qqagr{0.50}\\ \raisebox{1mm}{\tiny $\pm 0.00$}}}& \raisebox{2mm}{\makecell{\qqdiv{0.41}\\ \raisebox{1mm}{\tiny $\pm 0.00$}}}& \raisebox{2mm}{\makecell{\qqdiv{0.32}\\ \raisebox{1mm}{\tiny $\pm 0.00$}}}& \raisebox{2mm}{\makecell{\qqdiv{0.18}\\ \raisebox{1mm}{\tiny $\pm 0.00$}}}& \raisebox{2mm}{\makecell{\qqpol{0.01}\\ \raisebox{1mm}{\tiny $\pm 0.00$}}}& \raisebox{2mm}{\makecell{\qqpol{0.14}\\ \raisebox{1mm}{\tiny $\pm 0.00$}}}& \raisebox{2mm}{\makecell{\qqpol{0.47}\\ \raisebox{1mm}{\tiny $\pm 0.00$}}}\\[-3pt]
\raisebox{3mm}{10.} & \raisebox{3mm}{N(Tri,0.6)} &\includegraphics[width=0.76cm]{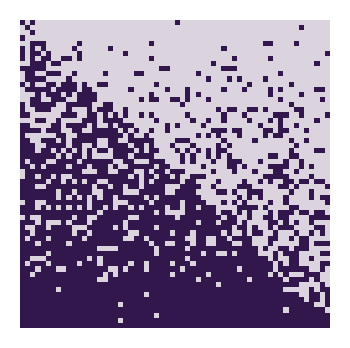}& \raisebox{2mm}{\makecell{\qqagr{0.21}\\ \raisebox{1mm}{\tiny $\pm 0.01$}}}& \raisebox{2mm}{\makecell{\qqagr{0.20}\\ \raisebox{1mm}{\tiny $\pm 0.02$}}}& \raisebox{2mm}{\makecell{\qqagr{0.06}\\ \raisebox{1mm}{\tiny $\pm 0.00$}}}& \raisebox{2mm}{\makecell{\qqagr{0.15}\\ \raisebox{1mm}{\tiny $\pm 0.02$}}}& \raisebox{2mm}{\makecell{\qqagr{0.33}\\ \raisebox{1mm}{\tiny $\pm 0.01$}}}& \raisebox{2mm}{\makecell{\qqagr{0.17}\\ \raisebox{1mm}{\tiny $\pm 0.02$}}}& \raisebox{2mm}{\makecell{\qqdiv{0.89}\\ \raisebox{1mm}{\tiny $\pm 0.01$}}}& \raisebox{2mm}{\makecell{\qqdiv{0.77}\\ \raisebox{1mm}{\tiny $\pm 0.03$}}}& \raisebox{2mm}{\makecell{\qqdiv{0.26}\\ \raisebox{1mm}{\tiny $\pm 0.00$}}}& \raisebox{2mm}{\makecell{\qqpol{0.00}\\ \raisebox{1mm}{\tiny $\pm 0.02$}}}& \raisebox{2mm}{\makecell{\qqpol{0.05}\\ \raisebox{1mm}{\tiny $\pm 0.00$}}}& \raisebox{2mm}{\makecell{\qqpol{0.39}\\ \raisebox{1mm}{\tiny $\pm 0.01$}}}\\[-3pt]
\raisebox{3mm}{11.} & \raisebox{3mm}{$\nicefrac{1}{2}$-ID/IC} &\includegraphics[width=0.76cm]{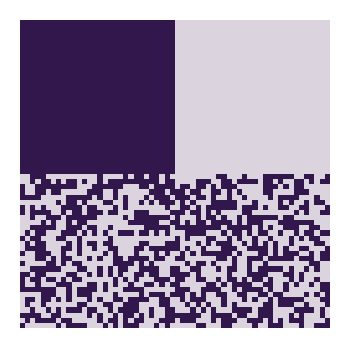}& \raisebox{2mm}{\makecell{\qqagr{0.50}\\ \raisebox{1mm}{\tiny $\pm 0.01$}}}& \raisebox{2mm}{\makecell{\qqagr{0.49}\\ \raisebox{1mm}{\tiny $\pm 0.01$}}}& \raisebox{2mm}{\makecell{\qqagr{0.26}\\ \raisebox{1mm}{\tiny $\pm 0.01$}}}& \raisebox{2mm}{\makecell{\qqagr{0.25}\\ \raisebox{1mm}{\tiny $\pm 0.01$}}}& \raisebox{2mm}{\makecell{\qqagr{0.51}\\ \raisebox{1mm}{\tiny $\pm 0.00$}}}& \raisebox{2mm}{\makecell{\qqagr{0.30}\\ \raisebox{1mm}{\tiny $\pm 0.00$}}}& \raisebox{2mm}{\makecell{\qqdiv{0.41}\\ \raisebox{1mm}{\tiny $\pm 0.01$}}}& \raisebox{2mm}{\makecell{\qqdiv{0.50}\\ \raisebox{1mm}{\tiny $\pm 0.00$}}}& \raisebox{2mm}{\makecell{\qqdiv{0.19}\\ \raisebox{1mm}{\tiny $\pm 0.00$}}}& \raisebox{2mm}{\makecell{\qqpol{0.07}\\ \raisebox{1mm}{\tiny $\pm 0.01$}}}& \raisebox{2mm}{\makecell{\qqpol{0.26}\\ \raisebox{1mm}{\tiny $\pm 0.02$}}}& \raisebox{2mm}{\makecell{\qqpol{0.45}\\ \raisebox{1mm}{\tiny $\pm 0.01$}}}\\[-3pt]
\raisebox{3mm}{12.} & \raisebox{3mm}{$\nicefrac{1}{2}$-IC} &\includegraphics[width=0.76cm]{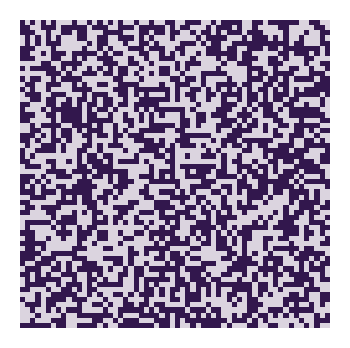}& \raisebox{2mm}{\makecell{\qqagr{0.10}\\ \raisebox{1mm}{\tiny $\pm 0.01$}}}& \raisebox{2mm}{\makecell{\qqagr{0.10}\\ \raisebox{1mm}{\tiny $\pm 0.01$}}}& \raisebox{2mm}{\makecell{\qqagr{0.02}\\ \raisebox{1mm}{\tiny $\pm 0.00$}}}& \raisebox{2mm}{\makecell{\qqagr{0.02}\\ \raisebox{1mm}{\tiny $\pm 0.00$}}}& \raisebox{2mm}{\makecell{\qqagr{0.34}\\ \raisebox{1mm}{\tiny $\pm 0.01$}}}& \raisebox{2mm}{\makecell{\qqagr{0.07}\\ \raisebox{1mm}{\tiny $\pm 0.00$}}}& \raisebox{2mm}{\makecell{\qqdiv{0.81}\\ \raisebox{1mm}{\tiny $\pm 0.01$}}}& \raisebox{2mm}{\makecell{\qqdiv{0.91}\\ \raisebox{1mm}{\tiny $\pm 0.00$}}}& \raisebox{2mm}{\makecell{\qqdiv{0.29}\\ \raisebox{1mm}{\tiny $\pm 0.00$}}}& \raisebox{2mm}{\makecell{\qqpol{0.07}\\ \raisebox{1mm}{\tiny $\pm 0.01$}}}& \raisebox{2mm}{\makecell{\qqpol{0.05}\\ \raisebox{1mm}{\tiny $\pm 0.00$}}}& \raisebox{2mm}{\makecell{\qqpol{0.18}\\ \raisebox{1mm}{\tiny $\pm 0.00$}}}\\[-3pt]
\raisebox{3mm}{13.} & \raisebox{3mm}{$\nicefrac{1}{4}$-IC} &\includegraphics[width=0.76cm]{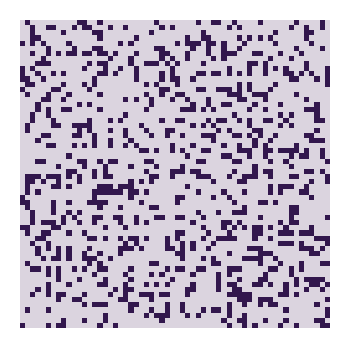}& \raisebox{2mm}{\makecell{\qqagr{0.49}\\ \raisebox{1mm}{\tiny $\pm 0.01$}}}& \raisebox{2mm}{\makecell{\qqagr{0.00}\\ \raisebox{1mm}{\tiny $\pm 0.00$}}}& \raisebox{2mm}{\makecell{\qqagr{0.02}\\ \raisebox{1mm}{\tiny $\pm 0.00$}}}& \raisebox{2mm}{\makecell{\qqagr{0.02}\\ \raisebox{1mm}{\tiny $\pm 0.00$}}}& \raisebox{2mm}{\makecell{\qqagr{0.16}\\ \raisebox{1mm}{\tiny $\pm 0.00$}}}& \raisebox{2mm}{\makecell{\qqagr{0.07}\\ \raisebox{1mm}{\tiny $\pm 0.00$}}}& \raisebox{2mm}{\makecell{\qqdiv{0.97}\\ \raisebox{1mm}{\tiny $\pm 0.00$}}}& \raisebox{2mm}{\makecell{\qqdiv{0.90}\\ \raisebox{1mm}{\tiny $\pm 0.00$}}}& \raisebox{2mm}{\makecell{\qqdiv{0.31}\\ \raisebox{1mm}{\tiny $\pm 0.00$}}}& \raisebox{2mm}{\makecell{\qqpol{0.01}\\ \raisebox{1mm}{\tiny $\pm 0.01$}}}& \raisebox{2mm}{\makecell{\qqpol{0.05}\\ \raisebox{1mm}{\tiny $\pm 0.00$}}}& \raisebox{2mm}{\makecell{\qqpol{0.16}\\ \raisebox{1mm}{\tiny $\pm 0.00$}}}\\[-3pt]
\raisebox{3mm}{14.} & \raisebox{3mm}{Lin-IC} &\includegraphics[width=0.76cm]{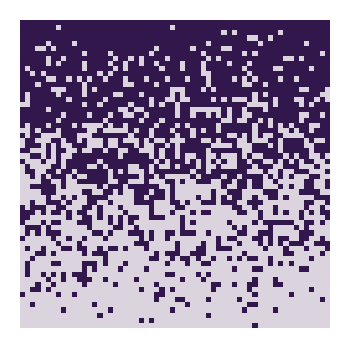}& \raisebox{2mm}{\makecell{\qqagr{0.08}\\ \raisebox{1mm}{\tiny $\pm 0.01$}}}& \raisebox{2mm}{\makecell{\qqagr{0.06}\\ \raisebox{1mm}{\tiny $\pm 0.01$}}}& \raisebox{2mm}{\makecell{\qqagr{0.01}\\ \raisebox{1mm}{\tiny $\pm 0.00$}}}& \raisebox{2mm}{\makecell{\qqagr{0.09}\\ \raisebox{1mm}{\tiny $\pm 0.02$}}}& \raisebox{2mm}{\makecell{\qqagr{0.31}\\ \raisebox{1mm}{\tiny $\pm 0.01$}}}& \raisebox{2mm}{\makecell{\qqagr{0.15}\\ \raisebox{1mm}{\tiny $\pm 0.03$}}}& \raisebox{2mm}{\makecell{\qqdiv{0.96}\\ \raisebox{1mm}{\tiny $\pm 0.01$}}}& \raisebox{2mm}{\makecell{\qqdiv{0.83}\\ \raisebox{1mm}{\tiny $\pm 0.03$}}}& \raisebox{2mm}{\makecell{\qqdiv{0.27}\\ \raisebox{1mm}{\tiny $\pm 0.00$}}}& \raisebox{2mm}{\makecell{\qqpol{0.00}\\ \raisebox{1mm}{\tiny $\pm 0.01$}}}& \raisebox{2mm}{\makecell{\qqpol{0.04}\\ \raisebox{1mm}{\tiny $\pm 0.00$}}}& \raisebox{2mm}{\makecell{\qqpol{0.37}\\ \raisebox{1mm}{\tiny $\pm 0.01$}}}\\[-3pt]

    \bottomrule
    \end{tabular}
    }
    \caption{\label{tab:index-values}Values of our indices for elections from various distributions. The elections have $60$ candidates and $60$ voters. Index
      values are averaged over 10 samples and the standard deviation is reported in smaller font. Similar table with more elections can be found in \Cref{app:dataset}.}
\end{table*}

\section{Agreement Indices}
\label{sec:agreement}
An agreement index should identify whether the voters have similar
views, i.e., tend to approve and disapprove the same candidates.  We
consider two types of such indices: Global ones look at the extent to
which all voters agree with each other, whereas local ones also detect
large groups of voters that are internally in agreement, but that may
disagree with each other.
We propose six agreement indices that we describe below.
All our indices give value $1$ exactly for $p$-ID elections---which,
indeed, represent perfect agreements---and for most of them we also
characterize when they give value~$0$.

\subsection{Approval Agreement}\label{sec:aa}
Our first index, Approval Agreement, adapts an idea from the world of
ordinal elections (where the voters rank the candidates).
There, for every pair of candidates $a$ and
$b$ we compute the absolute difference between the fractions of voters
that rank them one way or the other, and output the average of these
values~\citep{alc-vor:j:cohesiveness,has-end:c:diversity-indices,can-ozk-sto:polarization,fal-kac-sor-szu-was:c:div-agr-pol-map}.
In the approval setting, we compare the numbers of voters approving
and disapproving each given candidate.
\begin{definition}\label{def:av-index}
  Given an election $E = (C,V)$, its Approval
  Agreement is 
  \[
    \appragr(E) = 
    \frac{1}{|C|}\sum_{c \in C}\left|\frac{|A(c)|}{|V|} - \left(1-\frac{|A(c)|}{|V|}\right)\right| =
    \frac{1}{|C|}\sum_{c \in C}\left|1-2\frac{|A(c)|}{|V|}\right|.
    \]
\end{definition}
Unfortunately, this index has several drawbacks. Foremost, as we see
in the first row of \Cref{tab:index-values}, it fails the resampling
test of saturation independence (and, indeed, we see that it gives
different values for $\nicefrac{1}{2}$-IC and $\nicefrac{1}{4}$-IC).
This is also visible in the following proposition, where we show that
it assumes value $0$ if and only if each candidate is approved by
exactly half of the voters and, hence, it is not possible to assume
this value for saturation different from $0.5$.

\begin{proposition}\label{pro:av-agr-characterization}
  For an election $E$, $\appragr(E) = 1$ exactly if $E$ is an identity
  election, and $\appragr(E) = 0$ exactly if each candidate is
  approved by exactly half of the voters.
\end{proposition}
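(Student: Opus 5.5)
The plan is to argue term by term, using the second form of the definition. Each summand $\left|1-2\frac{|A(c)|}{|V|}\right|$ lies in $[0,1]$, since $\frac{|A(c)|}{|V|}\in[0,1]$. Hence $\appragr(E)$, being the average of $|C|$ such terms, lies in $[0,1]$. It equals $1$ (respectively $0$) exactly when every summand equals $1$ (respectively $0$). Both parts of the proposition therefore reduce to characterizing when a single summand attains its extreme value.

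For the value $0$ this is immediate. We have $\left|1-2\frac{|A(c)|}{|V|}\right|=0$ iff $|A(c)|=|V|/2$, i.e., iff $c$ is approved by exactly half of the voters. So $\appragr(E)=0$ iff this holds for every candidate, which is the claimed characterization. If $|V|$ is odd, the condition is unsatisfiable, and the statement remains consistent since the index then never equals $0$.

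For the value $1$, a summand equals $1$ iff $|A(c)|\in\{0,|V|\}$, i.e., each candidate is approved either by all voters or by none. It remains to show that this holds for all $c$ exactly when $E$ is an identity election.
\begin{itemize}
\item If every candidate is unanimously approved or unanimously disapproved, then all votes coincide: each voter approves exactly the set $\{c : |A(c)|=|V|\}$. So $E$ is a $p$-ID election with $p=|A(v_1)|/|C|$, up to renaming the candidates.
\item Conversely, in an identity election all voters cast the same ballot, so every candidate is approved by all or by none.
\end{itemize}

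The only step needing care is this last identification: ``identity election'' has to be read as ``all votes are identical,'' and $p$-ID as defined fixes the approved set as the first $\lfloor pm\rfloor$ candidates. I would remark that candidate names are irrelevant to $\appragr$, and that any common ballot with $k$ approvals is a $p$-ID election with $p=k/m$ after relabeling. With that, the proposition follows.
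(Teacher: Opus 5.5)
Your proposal is correct and follows the paper's proof essentially step for step: both reduce each extreme value of $\appragr(E)$ to every summand $\left|1-2\frac{|A(c)|}{|V|}\right|$ taking that value, then characterize the summands via $|A(c)|\in\{0,|V|\}$ and $|A(c)|=|V|/2$, respectively. Your explicit remarks that each summand lies in $[0,1]$ (needed for the ``average is extreme iff every term is extreme'' step) and that $p$-ID should be read up to relabeling of candidates are points the paper leaves implicit.
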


The index also behaves poorly on Party elections. As we see in
\Cref{tab:index-values}, for $2$-Party its value is $0$, but it
increases as we consider $3$-Party and $4$-Party, to become close to
$1$ for Diagonal. Indeed, the index treats the fact that many voters
disapprove the same candidates as a sign of agreement among them
(while in some cases---such as elections within parliaments---this
indeed might be a reasonable view, in others---such as in
participatory budgeting---it is ungrounded).

\subsection{Central Agreement(s)}
Our second index, Central Agreement, is designed to resemble the
Kemeny voting rule for ordinal elections~\citep{kem:j:no-numbers}.  In
that setting, a Kemeny ranking is a ranking that minimizes the sum of
swap distances to all the votes in a given election. The larger is
this value, the bigger is the disagreement among the voters. We
implement the same idea for approval elections using the Hamming
distance (while computing a Kemeny ranking is well-known to be
intractable~\citep{bar-tov-tri:j:who-won,hem-spa-vog:j:kemeny}, in our
case the analogous task is very simple).

Let $E = (C,V)$ be an election.
A vote $u$ over candidate set $C$ is central for $E$ if for each
candidate $c \in C$, $u$ agrees regarding $c$ with at least half of
the voters.
So, if exactly half of the voters approve some candidate and half
disapprove it, then there is more than one central vote for the given
election. We write $\cen(E)$ to denote the set of all central votes
for $E$.
\begin{definition}
  \label{def:cntr-agr}
  Let $E = (C,V)$ be an election and let $u$ be an
  arbitrary member of $\cen(E)$. We define
  \[ \cntragr(E)= 1 - \frac{\sum_{v_i \in V} \ham(v_i, u)}{|V| \cdot
    \min(\avl(E), \revavl(E))}.\]
\end{definition}
The denominator in the definition of $\cntragr(E)$ is chosen to be
(nearly) equal to the largest possible value of the numerator for
elections of the same size and saturation as $E$.
\begin{proposition}\label{pro:central-characterization}
  For an election $E$, $\cntragr(E) = 1$ exactly if $E$ is an identity
  election, and $\cntragr(E) = 0$ exactly if either $\cen(E)$ contains a vote
  approving all candidates or $\cen(E)$ contains a vote disapproving all candidates.
\end{proposition}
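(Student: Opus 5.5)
The plan is to rewrite the numerator of $\cntragr(E)$ candidate by candidate, and then compare it term by term with the denominator. Let $n=|V|$ and, for each $c \in C$, let $a_c = |A(c)|$. Fix $u \in \cen(E)$. For each candidate $c$, $u$ agrees with at least half of the voters on $c$. So $u$'s entry for $c$ disagrees with exactly $\min(a_c, n-a_c)$ voters; when $a_c = n/2$ both choices give the same count. Summing over candidates gives
\[
  \textstyle\sum_{v_i\in V}\ham(v_i,u)=\sum_{c\in C}\min(a_c,\,n-a_c).
\]
This also shows that the definition does not depend on which central vote is chosen. On the other side, $n\cdot\avl(E)=\sum_{c} a_c$ and $n\cdot\revavl(E)=\sum_c (n-a_c)$. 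Hence the denominator is $\min\bigl(\sum_c a_c,\sum_c(n-a_c)\bigr)$. Since $\sum_c\min(a_c,n-a_c)$ is at most both sums, we get $\cntragr(E)\in[0,1]$.

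\emph{Value $1$.} $\cntragr(E)=1$ iff the numerator is $0$, i.e., iff every vote equals $u$, i.e., iff all voters cast the same ballot. That is exactly an identity election. I must handle the degenerate case where the denominator vanishes, i.e., everyone approves all or no candidates. I would note that this is itself an identity election and adopt the natural convention that the index is then $1$, analogously to the convention fixed for $\pcc$.

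\emph{Value $0$.} First suppose $\cen(E)$ contains the all-disapproving vote. Then $a_c\le n/2$ for every $c$, so the numerator equals $\sum_c a_c = n\cdot\avl(E)$. It also follows that $\avl(E)\le\revavl(E)$, so the denominator is also $n\cdot\avl(E)$ and the index is $0$. The all-approving case is symmetric: there $a_c\ge n/2$ for every $c$, and the same computation with $\revavl$ gives $0$. Conversely, suppose the index is $0$ with a nonzero denominator. Say the minimum in the denominator is $\sum_c a_c$. Then $\sum_c \min(a_c,n-a_c)=\sum_c a_c$, and since each summand satisfies $\min(a_c,n-a_c)\le a_c$, equality must hold termwise. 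So $a_c\le n-a_c$ for all $c$, which means the all-disapproving vote is central. The other case symmetrically yields the all-approving vote.

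The only real subtlety is this termwise-equality step, together with the degenerate zero-denominator case. An empty (or full) election has an all-disapproving (resp.\ all-approving) central vote, yet it is also an identity election. So the statement must be read with the convention that such elections get value $1$, i.e., the zero characterization applies only when the denominator is positive. I would state this explicitly.
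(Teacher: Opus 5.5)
Your proof is correct and follows essentially the paper's route: you rewrite the central Hamming distance candidate by candidate as $\sum_{c}\min(|A(c)|,|V|-|A(c)|)$ and compare it with $|V|\cdot\avl(E)$ and $|V|\cdot\revavl(E)$. Your termwise-equality step is a tidier version of the paper's split into candidates approved by fewer than, more than, and exactly half of the voters, and it makes the paper's auxiliary lemma on maximal central Hamming distance unnecessary. Your remark on the zero-denominator case (all voters approve nothing, or all approve everything) catches an edge case the paper leaves unaddressed: such elections are identity elections and also have an all-disapproving or all-approving central vote, so fixing the value $1$ there by convention is the right repair.
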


Central Agreement has two major drawbacks. First, as indicated by
\Cref{pro:central-characterization}, it is quite non-discriminative
and often assumes value $0$ (e.g., on the majority of Pabulib elections).
Second, it fails the resampling test of saturation independence (see
\Cref{tab:index-values}).
Interestingly, 
Central Agreement and Approval Agreement differ mostly by their
normalization (replacing the denominator with $m$ in
\Cref{pro:cntr-value} below, we would get Approval Agreement).
\begin{proposition}\label{pro:cntr-value}
  Let $E = (C,V)$, then
  $$\cntragr(E) = 1 - \sum_{c \in C}
  \frac{1-\left|1-\frac{2|A(c)|}{|V|}\right|}{2 \cdot \min(\avl(E), \revavl(E))}.$$
\end{proposition}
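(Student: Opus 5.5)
The plan is to rewrite the numerator $\sum_{v_i \in V}\ham(v_i,u)$ of \Cref{def:cntr-agr} as a sum over candidates and then evaluate each candidate's contribution in closed form. Since $\ham(v_i,u)=\sum_{c\in C}|v_i[c]-u[c]|$, exchanging the order of summation gives
\[
\sum_{v_i\in V}\ham(v_i,u)=\sum_{c\in C}\bigl|\{v_i\in V : v_i[c]\neq u[c]\}\bigr|,
\]
so we need, for each candidate $c$, the number of voters that disagree with the central vote $u$ on $c$.

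For each $c$ we use the definition of a central vote, splitting into three cases.
\begin{itemize}
\item If $|A(c)|>|V|/2$, then $u$ approves $c$, and the disagreeing voters are those disapproving $c$. Their number is $|V|-|A(c)|$.
\item If $|A(c)|<|V|/2$, then $u$ disapproves $c$, and the number of disagreeing voters is $|A(c)|$.
\item If $|A(c)|=|V|/2$, then either choice for $u$ is allowed, and both give $|V|/2=|A(c)|$.
\end{itemize}
In every case the count equals $\min(|A(c)|,|V|-|A(c)|)$. This also shows that the value of $\cntragr$ does not depend on which member of $\cen(E)$ is chosen, which justifies the phrase ``arbitrary member'' in the definition.

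Next we apply the elementary identity $\min(a,b)=\frac{a+b-|a-b|}{2}$ with $a=|A(c)|$ and $b=|V|-|A(c)|$. This yields
\[
\min(|A(c)|,|V|-|A(c)|)=\frac{|V|-\bigl||V|-2|A(c)|\bigr|}{2}=\frac{|V|}{2}\left(1-\left|1-\frac{2|A(c)|}{|V|}\right|\right).
\]
Summing over $c$ and dividing by $|V|\cdot\min(\avl(E),\revavl(E))$ cancels the factor $|V|$. What remains is exactly the claimed expression $1-\sum_{c}\frac{1-|1-2|A(c)|/|V||}{2\min(\avl(E),\revavl(E))}$.

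No step is a real obstacle; the only point needing care is the tie case $|A(c)|=|V|/2$, which the third case above handles. One degenerate situation deserves a remark. If $\min(\avl(E),\revavl(E))=0$, the election is an all-approve or all-disapprove identity election. Then both sides involve $0/0$, and they should be read under the same convention, under which the sum is $0$.
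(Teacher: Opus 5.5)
Your proof is correct and follows essentially the same route as the paper: you exchange the order of summation to write the numerator as a sum of per-candidate disagreement counts, show that each count equals $\min(|A(c)|,|V|-|A(c)|) = \frac{|V|}{2} - \bigl|\frac{|V|}{2}-|A(c)|\bigr|$, and factor out $\frac{|V|}{2}$ to cancel against the denominator. Your explicit treatment of ties and of the degenerate case $\min(\avl(E),\revavl(E))=0$ is slightly more careful than the paper, which mentions ties only in a parenthetical and does not discuss the degenerate case.
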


We do not consider central agreement based on the Jaccard distance as
it would be intractable~\citep{chi-kum-pan-vas:c:jaccard-median} and
we find sufficiently many other agreement indices.

\subsection{Pairwise Agreements}
The final four indices are based on computing expected (dis)similarity
between two randomly selected votes.
\begin{definition}\label{def:pair-agree-indices}
  For an election $E = (C,V)$, we define Hamming, Jaccard, PCC, and
  PCC$^+$ Pairwise Agreement Indices:
  \begin{align*}
    \pairagr(E)     &= \textstyle 1-\frac{\sum_{u \in V}\sum_{v \in V} \ham(u,v)}{2|V|^2 |C| \cdot \satr(E)(1-\satr(E))},\\
    \jaccpairagr(E) &= \textstyle\frac{1}{|V|^2}{\sum_{u \in V}\sum_{v \in V}(1-\jacc(u,v))},\\
    \pccagr(E)      &= \textstyle\frac{1}{|V|^2}\sum_{u \in V}\sum_{v \in V}\pcc(u,v),\\
    \pospccagr(E)   &= \textstyle\frac{1}{|V|^2}\sum_{u \in V}\sum_{v \in V}\max(0,\pcc(u,v)).    
  \end{align*}
\end{definition}
The normalization for Hamming Pairwise Agreement
follows that for Central Agreement:
the denominator is equal to the maximum possible value of the
numerator among elections with the same size and saturation (assuming
that it is possible to ensure that every candidate has the same
approval score).
The normalizations for Jaccard and PCC$^+$ use the fact that we
average over values that
are between $0$ and $1$. For PCC Agreement, we need an additional argument.
\begin{proposition}\label{pro:pcc-is-index}
  For each election $E$, $\pccagr(E) \in [0,1]$.
\end{proposition}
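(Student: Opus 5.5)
The upper bound is immediate: every $\pcc(u,v)\le 1$, so the average over all $|V|^2$ ordered pairs is at most $1$. The substance is non-negativity. The plan is to show that $\sum_{u\in V}\sum_{v\in V}\pcc(u,v)$ is the squared norm of a vector.

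For each voter $v$ whose vote is neither all ones nor all zeros, define the standardized vector
\[
z_v[i]=\frac{v[i]-\overline{v}}{\sqrt{\sum_{j\in[m]}(v[j]-\overline{v})^2}},
\]
which is a unit vector. Call such votes non-degenerate; the remaining (constant) votes are degenerate. For two non-degenerate votes, \eqref{eq:pcc} gives $\pcc(u,v)=\langle z_u,z_v\rangle$. If there were no degenerate votes, we would have
\[
\sum_{u}\sum_{v}\pcc(u,v)=\Big\langle \sum_u z_u,\sum_v z_v\Big\rangle=\Big\|\sum_v z_v\Big\|^2\ge 0,
\]
and we would be done.

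The main obstacle is the convention that $\pcc(x,y)=1$ whenever $x$ or $y$ is constant, which breaks the Gram-matrix structure. I would handle it by splitting the sum. Let $D\subseteq V$ be the degenerate voters and $N=V\setminus D$. Every ordered pair with at least one member in $D$ contributes exactly $1$, so those pairs contribute $|V|^2-|N|^2\ge 0$. The pairs inside $N$ contribute $\|\sum_{v\in N}z_v\|^2\ge 0$. The total is therefore a sum of non-negative terms, and dividing by $|V|^2$ gives $\pccagr(E)\ge 0$.

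It remains to check that diagonal pairs ($u=v$) are consistent with this decomposition. Each contributes $\pcc(v,v)=1=\|z_v\|^2$ when $v$ is non-degenerate, and $1$ by convention when $v$ is degenerate. So they are correctly included in the Gram sum and in the count of degenerate pairs, respectively. Since the election is a multiset of votes, I would also stress that the sum ranges over ordered pairs with repetition; this is what makes the expression a squared norm rather than a sum over distinct pairs, which could be negative (as for a $2$-Party election without diagonal terms).
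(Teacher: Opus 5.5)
Your proof is correct and follows the same route as the paper: PCC values of non-constant votes are inner products of standardized unit vectors, so the sum over all ordered pairs is a squared norm (equivalently, the sum of the entries of a Gram matrix), and the upper bound comes from $\pcc \le 1$. Your explicit splitting-off of constant votes is more careful than the paper's argument. The paper calls the whole PCC matrix a Gram matrix, but under the convention $\pcc=1$ for constant vectors it need not even be positive semidefinite (the votes $(1,0)$, $(0,1)$, $(1,1)$ give a matrix with determinant $-4$), so your decomposition is what makes the argument airtight.
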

At first sight, computing each of these indices requires $O(|C||V|^2)$
time: Computing similarity between a pair of votes requires $O(|C|)$
time and there are $O(|V|^2)$ pairs to consider. Yet, by changing the
order of summation, we can compute Hamming Pairwise Agreement 
in $O(|C||V|)$ time.
\begin{theorem}\label{thm:pairwise-value}
  For an election $E = (C,V)$, we have that
  $$\pairagr(E) = 1-\frac{\sum_{c \in C}
    |A(c)|\cdot(|V|-|A(c)|) }{|V|^2 |C| \cdot
    \satr(E)(1-\satr(E))}.$$
\end{theorem}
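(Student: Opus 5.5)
The plan is to swap the order of summation in the numerator of $\pairagr(E)$ from \Cref{def:pair-agree-indices}, so that we sum over candidates first and over ordered pairs of voters second. The denominators then differ by exactly a factor of $2$, so the whole proof reduces to the identity
\[
\sum_{u \in V}\sum_{v \in V} \ham(u,v) = 2\sum_{c \in C} |A(c)|\cdot(|V|-|A(c)|).
\]

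First, I write $\ham(u,v)=\sum_{c \in C}|u[c]-v[c]|$. Exchanging the finite sums gives
\[
\sum_{c\in C}\sum_{u\in V}\sum_{v\in V}|u[c]-v[c]|.
\]

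Second, for a fixed candidate $c$, the term $|u[c]-v[c]|$ is $1$ exactly when one of $u,v$ approves $c$ and the other does not, and $0$ otherwise.
The ordered pairs $(u,v)$ with $u\in A(c)$ and $v\notin A(c)$ number $|A(c)|(|V|-|A(c)|)$. The symmetric pairs, with $u\notin A(c)$ and $v\in A(c)$, contribute the same amount, so the inner double sum equals $2|A(c)|(|V|-|A(c)|)$. Here $V$ is a collection, so identical votes are counted with multiplicity; this causes no issue because we sum over voter indices.

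Third, I substitute this into the definition. The factor $2$ cancels with the $2$ in the denominator $2|V|^2|C|\,\satr(E)(1-\satr(E))$, which yields exactly the claimed formula.

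There is no real obstacle. The only care needed is the factor of $2$ arising from ordered pairs. The degenerate case $\satr(E)\in\{0,1\}$ makes both expressions undefined in the same way, and the identity between the numerators still holds there.

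The $O(|C||V|)$ running time follows immediately, since all the values $|A(c)|$ can be computed in a single pass over the votes.
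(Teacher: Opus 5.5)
Your proof is correct and follows the paper's argument essentially verbatim: exchange the order of summation, count for each candidate $c$ the $2|A(c)|(|V|-|A(c)|)$ ordered voter pairs that disagree on $c$, and cancel the factor $2$ against the denominator of $\pairagr(E)$.
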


Next we characterize when Hamming and Jaccard Pairwise indices assume
values $1$ and $0$. In case of PCC-based indices we
only provide some examples and observations.
\begin{theorem}\label{thm:pairwise-characterization}
  Each of Pairwise, Jaccard, PCC, and PCC$^+$ Agreement Indices
  assumes value $1$ exactly for identity elections. Hamming Pairwise
  Agreement has value $0$ exactly if all candidates have equal
  approval score, and Jaccard Pairwise Agreement has value $0$ exactly
  if all the voters approve disjoint sets of candidates.
\end{theorem}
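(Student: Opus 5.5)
The plan is to reduce every claim to a statement about individual pairs of votes, or individual candidates, using the fact that each index is an average of quantities bounded by the extreme value.

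\emph{Value $1$.} For $\jaccpairagr$, $\pccagr$ and $\pospccagr$, each summand ($1-\jacc(u,v)$, $\pcc(u,v)$, $\max(0,\pcc(u,v))$) is at most $1$. So the index equals $1$ iff every summand equals $1$, i.e., iff every pair of votes has similarity exactly $1$. It then suffices to show that, for binary vectors, similarity $1$ forces $u=v$.
\begin{itemize}
\item For Jaccard: $\sum_i\min(u[i],v[i])=\sum_i\max(u[i],v[i])$ forces $u[i]=v[i]$ for each $i$, since $\min\le\max$ coordinatewise.
\item For PCC with both $u,v$ non-constant: $\pcc(u,v)=1$ means $v=au+b$ with $a>0$. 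Since $u$ takes both values $0$ and $1$, and $v$ is binary and non-constant, this gives $b=0$ and $a=1$, so $v=u$.
\end{itemize}
Conversely, in a $p$-ID election all votes coincide, so all summands are $1$. Here "identity election" means $p$-ID for any $p$, including the all-approve and all-disapprove votes.

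For $\pairagr$ I will use \Cref{thm:pairwise-value}. Its numerator $\sum_c |A(c)|(|V|-|A(c)|)$ vanishes iff every candidate is approved by all voters or by none, which holds iff all votes coincide.

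\emph{Value $0$ for Hamming.} Write $s_c=|A(c)|/|V|$, so that $\frac{1}{|C|}\sum_c s_c=\satr(E)$. By \Cref{thm:pairwise-value}, $\pairagr(E)=0$ iff
\[
\tfrac{1}{|C|}\textstyle\sum_c s_c(1-s_c)=\satr(E)(1-\satr(E)).
\]
The function $x\mapsto x(1-x)$ is strictly concave. Jensen's inequality therefore gives "$\le$", with equality iff all $s_c$ are equal, i.e., iff all candidates have equal approval score.

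\emph{Value $0$ for Jaccard.} Since $1-\jacc(u,v)\ge 0$, the average is $0$ iff every pair has intersection $\sum_i\min(u[i],v[i])=0$. That is exactly the condition that the voters approve pairwise disjoint sets.

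The main obstacle is the degenerate cases, which I would handle explicitly before the main argument.
\begin{itemize}
\item \emph{Constant votes under PCC.} The convention $\pcc(x,y)=1$ applies whenever $x$ or $y$ is constant. So a pair consisting of a constant vote and a different vote must be checked. Either the convention is read as applying only when both vectors are constant, or such pairs must be shown to be excluded, since otherwise they give PCC value $1$ without $u=v$. I expect to need a careful reading of the convention here.
\item \emph{Saturation $0$ or $1$.} For Hamming, $\satr\in\{0,1\}$ makes the denominator vanish. Such elections are themselves identity elections, so they are covered by the value-$1$ claim under the natural convention.
\item \emph{Diagonal terms for Jaccard.} The pairs $u=v$ contribute $1-\jacc(u,u)=1$ for every nonempty vote, which would prevent the value $0$. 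The value-$0$ claim must therefore be read with these diagonal pairs excluded from the sum, or with empty votes contributing $0$.
\end{itemize}
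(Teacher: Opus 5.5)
Your proof is correct. For the value-$1$ claims and the Jaccard value-$0$ claim you fill in what the paper dismisses as ``easy to verify'' and ``closer inspection''. For the Hamming value-$0$ claim, however, you take a genuinely different route.

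The paper proves an exchange lemma (\Cref{lem:ehd-largest}): moving one approval from a candidate to another with at least two fewer approvals strictly increases $\ehd$. Hence, for fixed size and saturation, $\ehd$ is maximal exactly when approval scores differ by at most one. The paper then checks that $\pairagr=0$ when all scores are equal, and verifies $\pairagr>0$ on an explicit near-balanced election.

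You instead rewrite \Cref{thm:pairwise-value} as $\pairagr(E)=1-\bigl(\frac{1}{|C|}\sum_c f(s_c)\bigr)/f(\satr(E))$, with $f(x)=x(1-x)$ and $\satr(E)$ the mean of the $s_c$, and apply strict Jensen. This gives nonnegativity and the equality case at once, and makes the explicit computation unnecessary. The discrete argument additionally identifies the integer minimizers of $\pairagr$ when exactly equal scores are infeasible, which Jensen does not, but the statement does not need this.

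Your degenerate-case flags are real defects of the statement, not matters of interpretation, so state them as amendments rather than hedges.
\begin{itemize}
\item \emph{PCC and PCC$^+$.} Under the paper's convention ($\pcc(x,y)=1$ whenever $x$ or $y$ is constant), consider the election consisting of the all-zero vote and any non-constant vote, or of the all-zero and all-one votes. It has $\pccagr=\pospccagr=1$ without being an identity election, so no reading of the convention rescues the claim. You must either exclude constant votes or change the convention, e.g., to $\pcc(x,y)=1$ if $x=y$ and $0$ if $x\neq y$ and one of them is constant. This still keeps $\pccagr\ge 0$, since constant votes can be mapped to fresh orthogonal unit vectors in the Gram-matrix argument, and it makes your value-$1$ proof go through.
\item \emph{Jaccard.} $\jaccpairagr$ sums over all ordered pairs including $u=v$; the paper's own table reports $0.02\approx 1/60$ for Diagonal. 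With nonempty votes, the minimum is therefore $1/|V|$, attained exactly when the votes are pairwise disjoint. The value-$0$ claim holds only for the off-diagonal average.
\item \emph{Empty votes.} Both the value-$1$ and value-$0$ Jaccard arguments also need a convention for $\jacc(u,v)$ when $u$ and $v$ are both empty.
\end{itemize}
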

\begin{proposition}\label{pro:pcc}
  For each $k$-Party election $E = (C,V)$ such that both $|C|$ and
  $|V|$ are divisible by $k$, we have $\pccagr(E) = 0$ and
  $\pospccagr(E) = \nicefrac{1}{k}$.
\end{proposition}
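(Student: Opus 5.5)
Since both $|C|$ and $|V|$ are divisible by $k$, the election splits into $k$ candidate groups $C_1,\dots,C_k$, each of size $m/k$, and $k$ voter groups $V_1,\dots,V_k$, each of size $n/k$, where the voters in $V_i$ approve exactly $C_i$. Every vote therefore has exactly $m/k$ ones, so each vote has mean $\overline{v}=1/k$ and the same sum of squared deviations, $\sum_j (v[j]-1/k)^2 = m\cdot\frac{1}{k}\bigl(1-\frac{1}{k}\bigr)$. Hence the denominator of $\pcc(u,v)$ in \eqref{eq:pcc} equals this common value for every pair, and the degenerate case (a constant vote) arises only when $k=1$. In that case $E$ is an identity election, $\pcc(u,v)=1$ for all pairs, and the claim has to be read with $\pccagr(E)=\pospccagr(E)=1$; presumably $k\ge 2$ is intended, and I will assume it.

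For $k\ge 2$ I compute $\pcc(u,v)$ for the two kinds of pairs. If $u,v$ lie in the same group, then $u=v$ and $\pcc(u,v)=1$. If they lie in different groups, the numerator is
\[\sum_j u[j]v[j]-\tfrac{m}{k^2}=0-\tfrac{m}{k^2},\]
because $\sum_j (u[j]-\overline{u})(v[j]-\overline{v})=\sum_j u[j]v[j]-m\overline{u}\,\overline{v}$ and the two groups approve disjoint candidate sets. Dividing by $\frac{m}{k}\bigl(1-\frac1k\bigr)$ gives $\pcc(u,v)=-\frac{1}{k-1}$.

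It remains to count ordered pairs $(u,v)\in V\times V$. Exactly $k\cdot(n/k)^2=n^2/k$ of them lie within a group, and $n^2(1-1/k)$ lie across groups. Then
\[\pccagr(E)=\frac{1}{n^2}\Bigl(\tfrac{n^2}{k}-n^2\bigl(1-\tfrac1k\bigr)\tfrac{1}{k-1}\Bigr)=\tfrac1k-\tfrac1k=0.\]
For $\pospccagr(E)$ the negative cross-group terms are clipped to $0$, so only the within-group pairs contribute, giving $\pospccagr(E)=\frac{1}{n^2}\cdot\frac{n^2}{k}=\frac1k$.

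None of these steps is hard. The only points needing care are two bookkeeping details: the sum runs over ordered pairs including $u=v$, which matches the within-group count $k(n/k)^2$, and the divisibility assumption is what makes every vote have exactly $m/k$ approvals, which in turn makes the denominators equal. If either group size varied, the cross-group correlations would differ from pair to pair and the exact cancellation to $0$ would fail.
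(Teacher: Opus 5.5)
Your proof is correct and follows the paper's argument: within-group pairs have $\pcc=1$, cross-group pairs have $\pcc=-\frac{1}{k-1}$, and counting the $n^2/k$ within-group ordered pairs yields both values; the paper derives the cross-group value from the $2\times 2$ contingency-table form of PCC rather than the mean-deviation form. Your caveat about $k=1$ is also right: $E$ is then an identity election with $\pccagr(E)=1$, and the paper's computation tacitly assumes $k\ge 2$ as well, since it divides by $1-\frac{1}{k}$.
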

\begin{remark}
  Let $p \in [0,1]$ be a number and let $u$ and $v$ be two votes such
  that a $p^2$ fraction of candidates is approved in both votes,
  a $(1-p)^2$ fraction is disapproved in both, a $p(1-p)$
  fraction is only approved in $u$, and a $p(1-p)$ fraction is only
  approved in $v$. Then $\pcc(u,v) = 0$.  As votes in $p$-IC elections
  are close to satisfying these conditions, we expect PCC and PCC$^+$
  Agreements to be close to $0$ for IC elections.
\end{remark}

We note that Hamming Pairwise Agreement and PCC Agreement tend to give
very similar results for the examples in \Cref{tab:index-values}.  In
fact, if all voters approve the same number of candidates, then the
two indices are equal (but this is not the case when vote lengths
differ; see the Triangle election).
\begin{proposition}\label{prop:pairagr-pccagr-sometimes-equal}
  Let $E=(C,V)$ be an election such that $0 < |A(u)| = |A(v)| < |C|$ for all $u, v \in V$.
  Then we have that $\pairagr(E) = \pccagr(E)$.
\end{proposition}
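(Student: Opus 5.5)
Let $k = |A(v)|$ be the common vote length, $m = |C|$, $n = |V|$, and $s = k/m$. The assumption $0<k<m$ ensures $\satr(E) = s \in (0,1)$, so every denominator below is nonzero and no PCC value falls into the degenerate convention. The plan is to express both indices in terms of the pairwise overlaps $|A(u)\cap A(v)|$ and compare them term by term.

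\textbf{Step 1: one PCC value.} Fix two votes $u,v$ and let $a = |A(u)\cap A(v)|$. Both vectors have mean $s$. The numerator of \eqref{eq:pcc} is
\[
\sum_i u[i]v[i] - m s^2 = a - k^2/m .
\]
Each factor in the denominator is $\sqrt{k - k^2/m} = \sqrt{m s(1-s)}$, so the denominator equals $m s(1-s)$. For binary vectors of equal length we also have $\ham(u,v) = 2k - 2a$, which gives $a = k - \ham(u,v)/2$. Substituting,
\[
\pcc(u,v) = \frac{k - k^2/m - \ham(u,v)/2}{m s(1-s)} = 1 - \frac{\ham(u,v)}{2 m s(1-s)} ,
\]
where the last equality uses $k - k^2/m = m s(1-s)$.

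\textbf{Step 2: average over pairs.} Average this identity over all $n^2$ ordered pairs $(u,v)$, including $u=v$; for those pairs both sides equal $1$, so the identity still holds. This gives
\[
\pccagr(E) = 1 - \frac{\sum_{u}\sum_{v}\ham(u,v)}{2 n^2 m\, s(1-s)} .
\]
By \Cref{def:pair-agree-indices}, this is exactly $\pairagr(E)$, since $|C|\cdot\satr(E)(1-\satr(E)) = m s(1-s)$.

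\textbf{Main obstacle.} There is no real difficulty; the only care needed is bookkeeping. The variance in the PCC denominator must be computed correctly as $k(1-s)$. One also has to confirm that the equal-length hypothesis is what makes each PCC denominator the same constant $m s(1-s)$ and makes the average of the vectors' means exactly $s$. Without equal lengths, the per-pair normalizations would differ and the identity would fail, as the Triangle election shows.
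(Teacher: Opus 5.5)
Your proposal is correct and follows essentially the same route as the paper: establish the pairwise identity $\pcc(u,v) = 1 - \frac{\ham(u,v)}{2m s(1-s)}$ for two votes of common length $sm$, then average over all $|V|^2$ ordered pairs and use $\satr(E)=s$. The only cosmetic difference is that the paper derives the identity from the $2\times 2$ contingency form of PCC in terms of $n_{11},n_{10},n_{01},n_{00}$, whereas you compute it directly from the definition using the overlap $a=n_{11}$.
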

In the resampling experiment both Hamming and PCC Agreements give
results independent of saturation. Overall, we view both of them
as good global agreement indices.

Jaccard and PCC$^+$ Pairwise Agreement indices capture local
agreement, by detecting 
large agreeing groups, even if they disagree with other such groups
(indeed, in PCC$^+$ we only sum positive PCC values to account for
like-minded voters, disregarding negative correlations that could
cancel them out).  For example, they give high---and nearly
identical---values for party elections (the fewer parties, the higher
agreement, as there are larger agreeing groups) and detect agreeing
groups in $\nicefrac{1}{2}$-$\ID$/$\IC$,
and Lin-IC.
Yet, we prefer PCC$^+$ Agreement as it passes the resampling test of
saturation independence and assigns close-to-zero values to $\IC$
elections.  The fact that Jaccard Pairwise Agreement fails the
resampling test of saturation is expected, as it treats approvals and
disapprovals assymetrically.  If such asymmetry is natural in a given
setting it still might be a valuable index.

\section{Diversity and Polarization Indices}\label{sec:div-pol}
Intuitively, an election is diverse if (a) its votes tend to be
different from each other, and (b) they cover the space of all
possible votes (modulo saturation). For example, we view Diagonal as a
maximally diverse election, because for its saturation (one approval
per voter) it contains exactly one copy of each possible vote. Yet, we
view Cyclic, which is similar in spirit to Diagonal, as much less
diverse than $\nicefrac{1}{2}$-IC; these two types of election have
the same saturation, but the votes in the latter are notably more
different from each other and, intuitively, cover the preference space
more evenly.

On the other hand, we say that an election is polarized if it
consists of two groups of voters that agree internally but strongly
disagree with each other. Consequently, $2$-Party is an example of
a maximally polarized election. We acknowledge that elections with
three, four, or more internally consistent but disagreeing groups
could also be seen as maximally polarized---and seeking indices
capturing this intuition would be valuable---but we leave this for
future work.

For a broader view of diversity and polarization, see the special
issue edited by \citet{lev-mil-perr:dynamics-of-polit-polar}.  In
voting literature, see, e.g., the works of
\citet{neh-pup:j:diversity}, \citet{has-end:c:diversity-indices},
\citet{can-ozk-sto:polarization},
\citet{kar-mar-rii-zho:t:condorcet-diversity},
\citet{amm-pup:j:domain-diversity}, and
\citeauthor{fal-kac-sor-szu-was:c:div-agr-pol-map}~[\citeyear{fal-kac-sor-szu-was:c:div-agr-pol-map,fal-mer-nun-szu-was:c:diff-sizes,fal-was-sor-szu:c:k-kemeny,fal-was-sor-szu:c:outer-diversity}].

\subsection{Clustering-Based Diversity and Polarization}
Our first approach to designing diversity indices follows the view of
\citet{fal-kac-sor-szu-was:c:div-agr-pol-map,fal-was-sor-szu:c:k-kemeny},
that an election is diverse if even after clustering its
voters into like-minded groups, these groups disagree internally.
As opposed to them, we parameterize our definition with
an agreement index.

For a given integer $k$ and an election $E = (C,V)$, let
$\mathcal{P}_k(V)$ be a set of all complete partitions of voters into
$k$ disjoint subsets $V_1,\dots,V_k$, and let $\agr$ be some agreement
index.  We aim to find a partition that maximizes average agreement
within the subsets, weighted by the size of each cluster, i.e.,
$\sum_{i=1}^k \frac{|V_i|}{|V|}\agr(C,V_i)$.  In principle, there are
many ways in which the values for different $k$ can be aggregated to
give a single index.  We follow a simple approach of
\citet{fal-mer-nun-szu-was:c:diff-sizes} and take the average of
values for $k$ from $1$ to $5$.  This results in the following
definition.

\begin{definition}
\label{def:inner:diversity}
  For an election $E = (C,V)$, and agreement index $\agr$,
  we define the $\agr$-\emph{Diversity} index as:
    \[
    \textstyle 
    \agr\hbox{-}\mathrm{div}(E) = 1 -\frac{1}{5}\sum_{k = 1}^5 \displaystyle \max_{\{V_1,\dots,V_k\} \in \mathcal{P}_k(V)} \textstyle \sum_{i=1}^k \frac{|V_i|}{|V|}\agr(C,V_i).
    \]
\end{definition}
Taking Central Agreement as the underlying index, we obtain Central
Diversity, denoted $\cntrdiv$, and taking PCC Pairwise Agreement we
obtain PCC Diversity, denoted $\pccdiv$. We do not consider Hamming
Pairwise Agreement for the sake of focus and due to its
similarity to $\pccagr$.
In both cases, obtaining the optimal partition of votes is not
computationally feasible, thus we rely on clustering heuristics.  We
approximate cntr-div with an algorithm based on the $k$-means
clustering: We employ the $k$-medoids version using Hamming distance,
where cluster centers are restricted to observed votes and are
iteratively updated to minimize the sum of intra-cluster
distances. For pcc-div we use spectral
clustering~\citep{ng-jor-wei:c:spectral-clustering}; our input for
spectral clustering is the number $k$ of clusters and the affinity
matrix, where for each pair of votes $u, v$ we provide value
$1+\frac{1}{2}\pcc(u,v)$ (value~$0$ means total dissimilarity, and $1$
means the votes are equal).

To quantify polarization, we measure the increase of agreement arising
from clustering the election into two groups instead of viewing it as
a single group (again, the general idea is due to
\citet{fal-kac-sor-szu-was:c:div-agr-pol-map,fal-was-sor-szu:c:k-kemeny}).
\begin{definition}
\label{def:agr:based:polarization}
  For an election $E = (C,V)$, and agreement index $\agr$,
  we define the $\agr$-\emph{Polarization} index as:
    \[
    \textstyle 
    \agr\hbox{-}\mathrm{pol}(E) = \displaystyle \max_{\{V_1,V_2\} \in \mathcal{P}_2(V)} \textstyle \sum_{i=1}^2\frac{|V_i|}{|V|}\agr(C,V_i) - \agr(E).
    \]
\end{definition}
We consider Central Polarization ($\cntrpol$) and PCC Polarization
($\pccpol$), based on the Central and PCC Pairwise Agreement indices,
and we use the same clustering heuristics as in the case of diversity
indices.

Additionally, we consider the Pairwise Polarization index, which
outputs the standard deviation of Hamming distances between all pairs
of votes.  The intuition is that in a polarized electorate each two
voters either have very similar or close-to-opposite preferences
(factor of $2/|C|$ ensures values in $[0,1]$).
\newcommand{\myfrac}[2]{\frac{#1}{#2}}
\begin{definition}
    For an election $E = (C,V)$, the Pairwise Polarization index is defined as
    \[
    \pairpol(E) = \textstyle 2 \sqrt{\myfrac{\sum_{u, v \in V} (\ham(u,v) - \sum_{x, y \in V}\ham(x,y)/|V|^2)^2}{|V|^2 |C|^2}}.
    \]
\end{definition}

\subsection{Outer Diversity}
Recently, \citet{fal-was-sor-szu:c:outer-diversity} proposed an
alternative diversity index, based on the average distance between the
votes in the election and all possible votes.  If the average distance
is small, the votes are well spread in the whole space, i.e., diverse.
It was originally proposed in the ordinal setting and for domains
(sets of votes without specified multiplicity) rather than elections,
but we adapt it to our setting as follows.

For two collections of votes over the same set of candidates,
$U = (u_1,\dots,u_k)$ and $V = (v_1,\dots,v_n)$,
by $\ham(U,V)$ we mean
the average Hamming distance
between $n$ copies of votes in $U$
and $k$ copies of votes in $V$, matched
so as to minimize this distance, i.e.:
\[
    \textstyle \ham(U,V) = \frac{1}{kn} \min_{\pi : [kn]\to[kn]}\sum_{i=1}^{kn}\ham(u_i,v_{\pi(i)}),
\]
where $\pi$ is a bijection and indices from $U$ are taken modulo $k$
and from $V$ modulo $n$.

For a candidate set $C$, let $\mathcal{U}_C$ consist of a single copy of
each vote from $\{0,1\}^{|C|}$.  Then, in principle, we could measure
diversity of an election $E=(C,V)$ based on $\ham(V, \mathcal{U}_C)$,
i.e., the average distance between votes in $V$ and all possible
votes.  However, if the saturation of $E$ is very small or large,
the value of $\ham(V, \mathcal{U}_C)$ is always substantial (as most votes
in $\mathcal{U}_C$ have moderate saturation).  Thus, no election with
small/large saturation could be considered diverse according to such a
measure.  To circumvent this issue, instead of $\mathcal{U}_C$ we use
$p \hbox{-}\mathcal{U}_C$, defined as a collection in which the number of
copies of each vote is proportional to the probability of drawing such
a vote from $p$-IC
(so, $\nicefrac{1}{2}\hbox{-}\mathcal{U}_C =\mathcal{U}_C$).

We would like the index to be maximum and equal to $1$ if $V = p \hbox{-}\mathcal{U}_C$, and minimum and equal to $0$ if $V$ is a singleton. From the following lemma we get that for a single vote $v$ with saturation $p$, it holds that $\ham(\{v\}, p \hbox{-}\mathcal{U}_C)=2p(1-p)$.

\begin{lemma}
    \label{lem:ham:dist:single-vote:un}
    For candidate set $C$,
    $q \in \{0,\frac{1}{|C|},\dots,1\}$ and $v \in \{0,1\}^{|C|}$ with $q |C|$ ones it holds that:
    $$\ham(\{v\},p\hbox{-}\mathcal{U}_C) = p(1-q) + q(1-p).$$
\end{lemma}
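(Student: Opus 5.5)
Since $U = (v)$ is a single vote, all $kn$ copies on the $U$ side are identical. The bijection $\pi$ in the definition of $\ham(U,V)$ therefore plays no role: every matching pairs each vote of $p\hbox{-}\mathcal{U}_C$ (with its multiplicity) with $v$ exactly the same number of times. So $\ham(\{v\},p\hbox{-}\mathcal{U}_C)$ is just the weighted average of $\ham(v,w)$ over $w \in \{0,1\}^{|C|}$, with weights proportional to the $p$-IC probability of $w$. The first step is to state this reduction explicitly, checking the multiplicity bookkeeping: $k=1$ and $n=|p\hbox{-}\mathcal{U}_C|$, so we average over $n$ terms, each term being $\ham(v,w)$ for one copy of $w$. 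In other words, the quantity equals $\mathbb{E}_{w\sim p\text{-IC}}[\ham(v,w)]$.

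A minor subtlety arises if $p$ is irrational or the copy counts are not integers. Then "proportional to the probability" should be read as a weighted collection, and the matching argument goes through with weights, or by a limiting argument. I would note this rather than belabor it, since any matching against a constant vote yields the same weighted average.

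Next, apply linearity of expectation. We have $\ham(v,w)=\sum_{i\in[m]}|v[i]-w[i]|$, and under $p$-IC each $w[i]\sim\bernoulli(p)$. For a coordinate with $v[i]=1$, the expected contribution is $1-p$; for a coordinate with $v[i]=0$, it is $p$. Since $v$ has $q|C|$ ones, the expected total is $q|C|(1-p)+(1-q)|C|p$.

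Finally, to reconcile this with the stated formula $p(1-q)+q(1-p)$, which has no factor $|C|$, divide by $|C|$. The lemma's value lies in $[0,1]$, and the later claim $2p(1-p)$ also shows that the distance here is normalized per candidate. So I would read $\ham(U,V)$ in this section as normalized by $|C|$, or equivalently as stated up to the factor $|C|$, and state that convention. The only real obstacle is this bookkeeping: justifying that the optimal matching is irrelevant and fixing the normalization. The computation itself is a one-line expectation.
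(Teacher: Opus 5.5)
Your proof is correct and rests on the same core step as the paper's: exchanging the sum over the votes of $p\hbox{-}\mathcal{U}_C$ with the sum over candidates. You phrase it probabilistically, whereas the paper phrases it as an explicit double count. The paper writes $p=a/(a+b)$ with integers $a,b$, so that a vote with $r$ ones has exactly $a^r b^{m-r}$ copies. It encodes these copies as functions $f\colon[m]\to[a+b]$, with $f(i)\in[a]$ meaning candidate $i$ is approved, and counts the pairs $(x,f)$ in which coordinate $x$ of the encoded vote disagrees with $v$. Counting by $f$ first gives the total Hamming distance from $v$ to the collection; this is your reduction step, which the paper uses implicitly without mentioning the matching. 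Counting by $x$ first gives $((1-q)a+qb)\,m\,(a+b)^{m-1}$. A uniformly random $f$ induces exactly a $p$-IC vote, so this is your linearity-of-expectation computation in combinatorial dress. Your version is shorter, makes the irrelevance of the bijection $\pi$ explicit, and does not need integer multiplicities. The paper's version stays with finite integer counts, matching the literal definition of $p\hbox{-}\mathcal{U}_C$ as a collection of copies. Your reading of the normalization is also right: the paper's claim (1) sets $\ham(\{v\},p\hbox{-}\mathcal{U}_C)=|X|/(m|F|)$, so it silently divides the Hamming distance by $|C|$, exactly as you propose.
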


\begin{definition}
\label{def:outdiv}
  For an election $E = (C,V)$ with $\satr(E) = p$,
  the \emph{Outer Diversity} index is defined as
  \[
    \textstyle\outdiv(E) = 1 - \frac{1}{2p(1-p)}\ham(V, p\hbox{-}\mathcal{U}_C).
  \]
\end{definition}

As discussed,
the value $1$ obtained for $V = p\hbox{-}\mathcal{U}_C$,
and a single vote $v$ always yields the value of $0$.
It remains to prove that this is indeed the minimum and maximum value.

\begin{proposition}
  \label{prop:out-div:normalization}
  For each election $E$, $\outdiv(E) \in [0,1].$
\end{proposition}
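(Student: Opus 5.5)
The claim amounts to the two inequalities $0 \le \ham(V, p\hbox{-}\mathcal{U}_C) \le 2p(1-p)$, where $p = \satr(E)$. The first of these immediately gives $\outdiv(E) \le 1$: Hamming distances are nonnegative, so the minimum over matchings is nonnegative. All the work is in the second inequality, which gives $\outdiv(E) \ge 0$. I will bound the minimum from above by exhibiting one specific bijection $\pi$, since the minimum is at most the cost of any bijection.

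Let $V=(v_1,\dots,v_n)$ and let $p\hbox{-}\mathcal{U}_C=(u_1,\dots,u_k)$, listed with multiplicities. In the definition of $\ham(U,V)$ we work with $n$ copies of $p\hbox{-}\mathcal{U}_C$ and $k$ copies of $V$. I will build the bijection as follows. For each $i\in[n]$, the $k$ copies of $v_i$ are matched one-to-one with the $k$ entries of the $i$-th copy of $p\hbox{-}\mathcal{U}_C$. This is a valid bijection between the two multisets of size $kn$. Its total cost is $\sum_{i=1}^n \sum_{j=1}^k \ham(v_i,u_j)$, so its cost divided by $kn$ equals
\[
\textstyle\frac{1}{n}\sum_{i=1}^n \frac{1}{k}\sum_{j=1}^k \ham(v_i,u_j).
\]
The inner average is exactly $\ham(\{v_i\},p\hbox{-}\mathcal{U}_C)$. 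This holds because, for a single vote, every bijection pairs the $k$ copies of $v_i$ with all $k$ entries of $p\hbox{-}\mathcal{U}_C$, so the minimum in the definition is just this average.

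Next I apply \Cref{lem:ham:dist:single-vote:un} with $q_i=|A(v_i)|/|C|$, which gives $\ham(\{v_i\},p\hbox{-}\mathcal{U}_C)=p(1-q_i)+q_i(1-p)$. This expression is affine in $q_i$, and $\frac1n\sum_i q_i=\satr(E)=p$. Hence the average over $i$ equals $p(1-p)+p(1-p)=2p(1-p)$. Therefore $\ham(V,p\hbox{-}\mathcal{U}_C)\le 2p(1-p)$, and so $\outdiv(E)\ge 0$.

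The only delicate point is checking that the constructed $\pi$ is a legitimate bijection under the modular indexing convention; this is pure bookkeeping. Separately, the degenerate saturations $p\in\{0,1\}$ must be handled, since then the denominator $2p(1-p)$ vanishes. In those cases $p\hbox{-}\mathcal{U}_C$ consists only of the all-zero or all-one vote, and so does $V$, so the distance is $0$. I would adopt the natural convention (the value $0$, as for a single vote), which keeps the index in $[0,1]$.
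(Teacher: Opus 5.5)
Your proof is correct and follows essentially the same argument as the paper: nonnegativity gives $\outdiv(E)\le 1$, and for the lower bound you use the same bijection (each $v_i$ matched with its own full copy of $p\hbox{-}\mathcal{U}_C$), then \Cref{lem:ham:dist:single-vote:un} and averaging the expression that is affine in $q_i$ to reach $2p(1-p)$. The only differences are minor: you use the copies already built into the definition of $\ham(U,V)$ instead of the paper's extra step of replicating $p\hbox{-}\mathcal{U}_C$ $n$ times and appealing to integrality of matchings, and you explicitly address the degenerate saturations $p\in\{0,1\}$, which the paper leaves implicit.
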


\subsection{Comparison of the Indices}
Let us now analyze the values of the indices given in
\Cref{tab:index-values}. First, we see that Central Diversity fails
the resampling test of saturation independence, whereas the other
diversity indices pass it.
For polarization indices the test is not meaningful,
as we expect low polarization for all values of $p$ and $\phi$ parameters.
We also see that clustering-based diversity indices give close-to-$1$ values for
Diagonal and IC elections (including Lin-IC). Outer diversity performs
surprisingly poorly on these elections, giving them low values, but we
have verified that its values are much closer to $1$ for analogous
elections with a notably larger number of voters. Still, this
dependence on election size seems a major drawback of the index. This
is quite disappointing, given how well the outer diversity idea worked
for ordinal preference
domains~\citep{fal-was-sor-szu:c:outer-diversity}.  Further, all
diversity indices give higher values for the noisy variants of
respective elections (see the values of $2$-Party versus
$\noisy(2$-Party$,0.6)$ and Triangle versus $\noisy($Triangle$,0.6)$).

All polarization indices give value $1$ for $2$-Party and lower values
for its noisy variant, as well as for $3$- and $4$-Party, as
expected. Pairwise Polarization still gives nearly-$1$  for
$(\nicefrac{1}{3},\nicefrac{1}{3})$-$2$-Party whereas Central and PCC Polarization give
lower ones, but both approaches could be justified.
Overall, Pairwise Polarization tends to give highest values.
For example, for Triangle it gives nearly $\nicefrac{1}{2}$, while the
other indices give values much closer to $0$.
Both approaches could be justified, either by Triangle's similarity
to $2$-Party, or by saying that in Triangle all voters agree on best-to-worst ranking
of the candidates, but have different approval thresholds.

\section{Experiments}\label{sec:experiments}

\begin{figure}[t]
\centering
\begin{minipage}{.5\textwidth}
  \centering
  \includegraphics[width=0.95\linewidth, trim=0 0 0 0, clip]{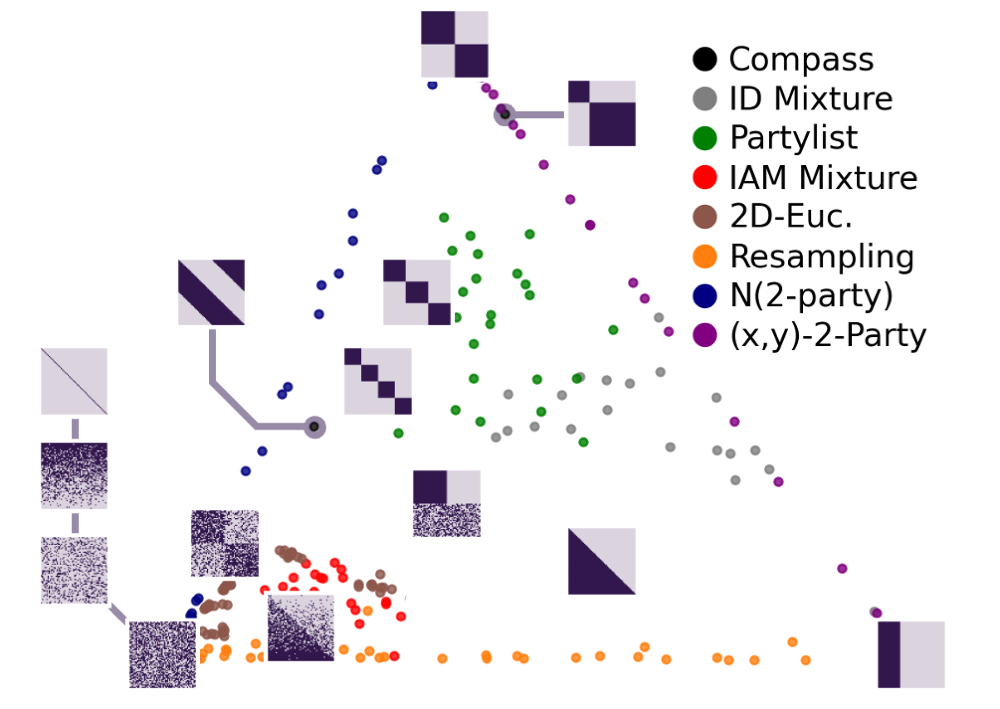}
  \captionof{figure}{Map of synthetic elections.}
  \label{fig:synthetic_map}
\end{minipage}%
\begin{minipage}{.5\textwidth}
  \centering
  \includegraphics[width=0.95\linewidth, trim=0 0 0 0, clip]{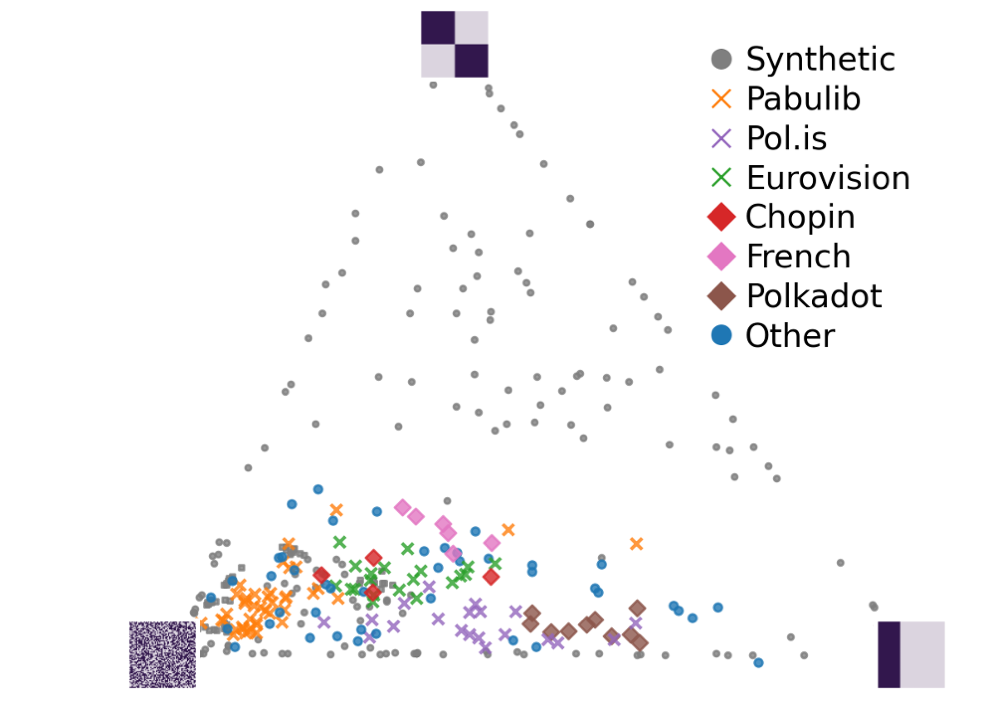}
  \captionof{figure}{Map of real-life elections.}
  \label{fig:real_life_map}
\end{minipage}
\end{figure}

To present experimental results for our indices, we use the \emph{map of elections} framework~\citep{szu-boe-bre-fal-nie-sko-sli-tal:j:map}. We construct the map as follows. First, we assemble a collection of elections. Next, we compute pairwise distances between elections using the feature distance, as in the work of \citet{fal-mer-nun-szu-was:c:diff-sizes}. Specifically, let $f(E)=(f_1(E),\dots,f_k(E))$ denote the feature vector of election $E$, where $f_1,\dots,f_k$ are our indices. For two elections $E$ and $F$, we define their distance as the (standard) Euclidean distance between their feature vectors, $d_f(E,F)=\|f(E)-f(F)\|_2$.
Finally, we embed the resulting distance matrix using multidimensional scaling (MDS~\citep{kru:j:mds}).

For the map, we use the following three indices: $\pccagr$, $\pccdiv$, and $\pccpol$.\footnote{We chose these indices as their sum is the most consistently close to 1 among those studied (see \Cref{app:complementarity} for details), which shows that they are complementary and, hence, together provide a good high-level summary of the nature of a given election. For elections with more than 200 candidates or more than 1000 voters, we computed the indices using sampled subsets of the elections.}
In \Cref{fig:synthetic_map}, we show a map based on synthetic elections, while \Cref{fig:real_life_map} presents a map based on real-life ones (all our datasets will be publicly available). Each point corresponds to a single election. We describe the selected/generated elections in the ~\Cref{app:dataset}.

Interestingly, the synthetic map forms a triangle whose vertices correspond to three extreme cultures: ID (maximizing agreement), IC (maximizing diversity), and 2-party (maximizing polarization). The edge between IC and ID is populated by Resampling elections; the edge between 2-party and IC is filled by $\noisy$(2-Party) elections; and the edge between 2-party and ID is filled by $(x,y)$-$2$-Party elections. Both 2D-Euclidean (brown) and IAM (red) elections lie close to IC, reflecting high diversity together with low polarization and agreement. Finally, the partylist (green) elections occupy the upper region of the triangle, indicating high polarization.

Turning to the real-life map (\Cref{fig:real_life_map}), we observe that most elections exhibit very low polarization and, on average, substantially higher diversity than agreement. Compared with other real-world datasets, the Polkadot blockchain data (stakeholders voting over validators) shows relatively high agreement.
Both Polkadot and Pol.is (users approving comments on an online deliberation platform) exhibit particularly low polarization. Intuitively, this may reflect a shared “search for ground truth” dynamic in both settings.
Notably, the Eurovision (song contest) data looks similar to the Chopin (piano competition) one, which is interesting given that both involve people evaluating musical performances. 
In comparison to other real-life data, the French (presidential) elections seem relatively polarized.
Regarding Pabulib (participatory budgeting voting), with few exceptions, the data shows high diversity, with low agreement and low polarization. (This may stem from the fact that, in most cities, people can approve only a very limited number of projects, so the saturation is extremely low).
To conclude, it is worth noting that our map of real-life elections resembles the one from the work of~\citet{fal-mer-nun-szu-was:c:diff-sizes}, in which (while focusing on ordinal elections), most real-life instances appeared in a similar region of the map, i.e., low polarization, mid to high diversity, and low to mid agreement.

Finally, we note that MDS embeddings of maps in \Cref{fig:synthetic_map,fig:real_life_map}
achieve mean multiplicative distortions of 1.014 and 1.010, respectively,
indicating extremely good representation.
In particular, this is far more accurate than the original
maps presented in \cite[Section 4.3.2]{szu-boe-bre-fal-nie-sko-sli-tal:j:map},
where overall distortion is between 1.21 and 1.26.

\section{Future Work}
We foresee three directions for future work: First, a deeper study of
agreement indices in ordinal elections, going beyond the index we
mention in \Cref{sec:aa} and Kemeny rule. Second, finding more
principled diversity and polarization indices, also for polarization
with more than two disagreeing groups. Third, looking for further
features of (approval) election and designing indices for them (such
as, e.g., measuring if voters approve different candidates
independently).

\section*{Acknowledgments}
This project has received funding from the European Research Council (ERC) under
the European Union’s Horizon 2020 research and innovation programme (grant
agreement No 101002854), from the European Union under the project Robotics and advanced industrial production (reg. no. CZ.02.01.01/00/22\_008/0004590),
and from the Foundation for the University of
Geneva.
T. W\k{a}s was supported by the UK Engineering and Physical Sciences Research Council (EPSRC) under grant EP/X038548/1.
The research presented in this paper has been partially supported by the funds of Polish Ministry of Science and Higher Education assigned to AGH University of Science and Technology.
Further, it was inspired by research proposed for NCN project
UMO-2025/58/A/ST6/00371.
\begin{center}
  \includegraphics[width=3cm]{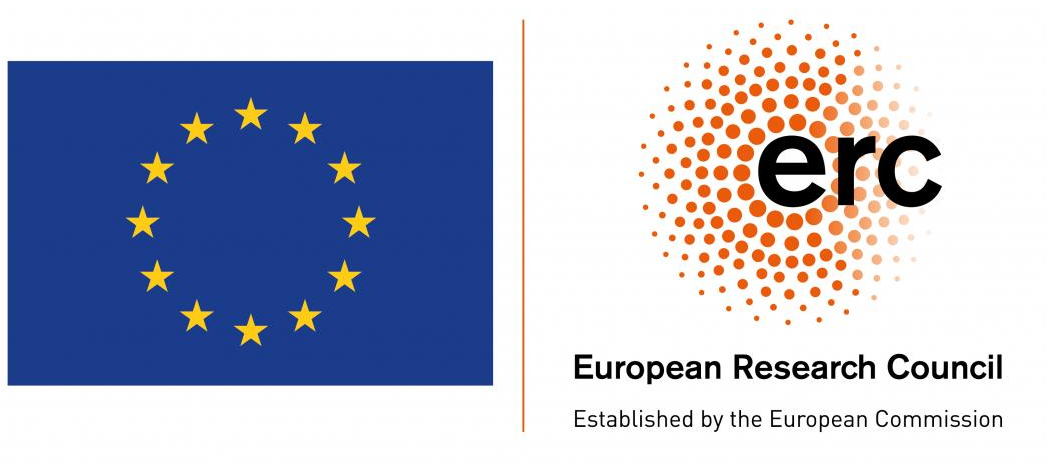}
  \includegraphics[width=3cm]{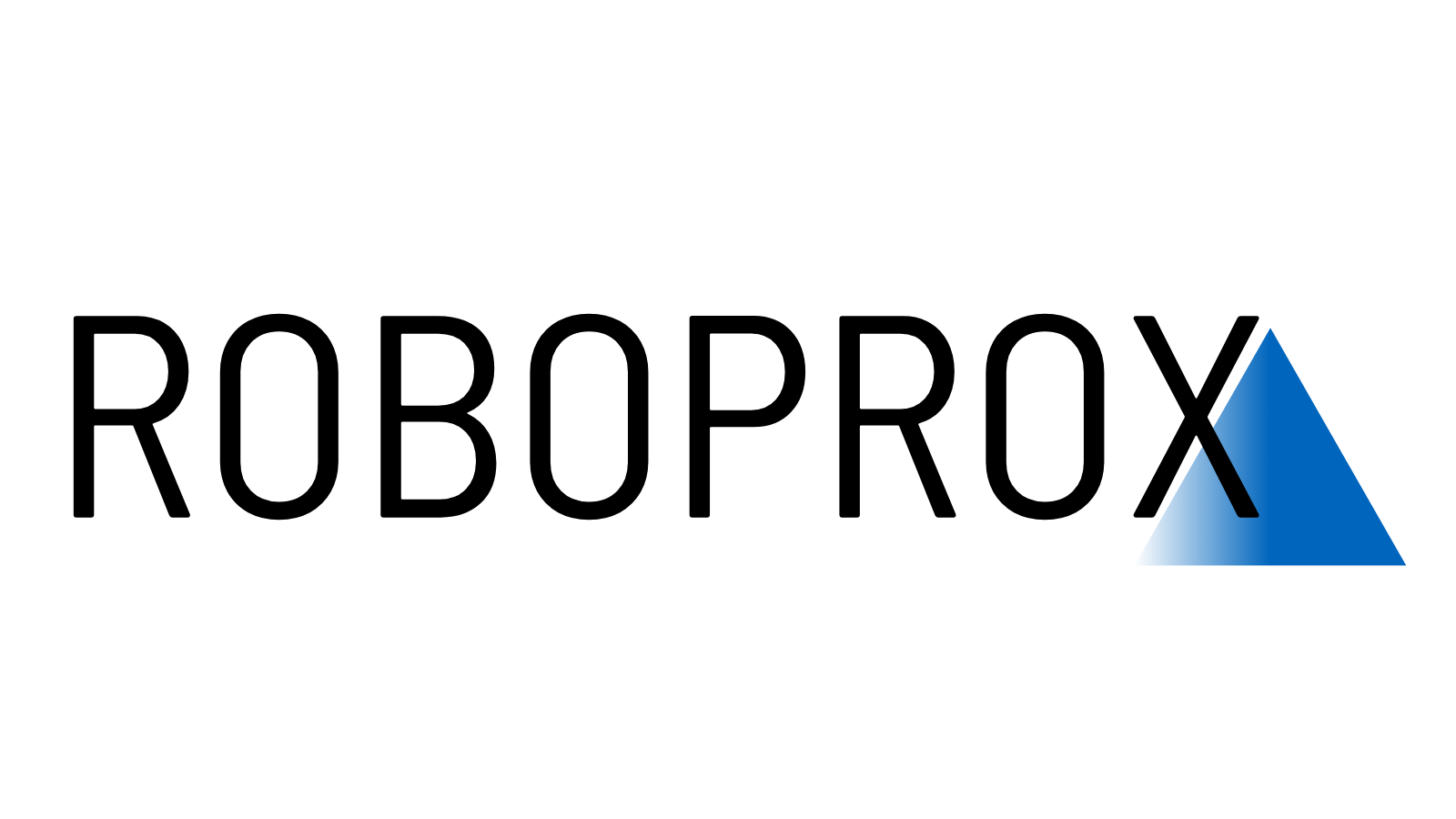}
\end{center}

\bibliography{bib}

\appendix

\clearpage

\section{Additional Tables}

\newcommand{\resamplingentry}[2]{\includegraphics[width=3cm]{img/resampling-sizes/#1\_#2.png}}

\newcommand{\resamplingrow}[2]{
  \raisebox{4.5\height}{#1} &
  \resamplingentry{app\_agr}{#2} &
  \resamplingentry{cntr\_agr}{#2}&
  \resamplingentry{pair\_agr}{#2}&
  \resamplingentry{jacc\_agr}{#2}&
  \resamplingentry{pcc\_agr}{#2}&
  \resamplingentry{pccplus\_agr}{#2}
  \\
}

\newcommand{\resamplingdow}[2]{
  \raisebox{4.5\height}{#1} &
  \resamplingentry{cntr-div}{#2} &
  \resamplingentry{pcc-div}{#2}&
  \resamplingentry{out-div}{#2}&
  \resamplingentry{cntr-pol}{#2}&
  \resamplingentry{pcc-pol}{#2}&
  \resamplingentry{ham-std}{#2}
  \\
}

\begin{table*}
  \scalebox{0.8}{
  \begin{tabular}{c|cccccc}
    \toprule
    size & \makecell{ Approval \\ Agreement} 
         & \makecell{ Central \\ Agreement}
         & \makecell{ Hamming Pairwise \\ Agreement}
         & \makecell{ Jaccard Pairwise \\ Agreement}    
         & \makecell{ PCC \\ Agreement}
         & \makecell{ PCC$^+$ \\ Agreement}\\
    \midrule
    \resamplingrow{$10 \times 10$}{10\_10}
    \resamplingrow{$20 \times 20$}{20\_20}
    \resamplingrow{$60 \times 60$}{60\_60}
    \bottomrule
  \end{tabular}
}

\rule{0cm}{0.3cm}

  \scalebox{0.8}{
  \begin{tabular}{c|cccccc}
    \toprule
    size & \makecell{ Central \\ Diversity} 
         & \makecell{ PCC \\ Diversity}
         & \makecell{ Outer \\ Diversity}
         & \makecell{ Central \\ Polarization}    
         & \makecell{ PCC \\ Polarization}
         & \makecell{ Pairwise \\ Polarization}\\
    \midrule
    \resamplingdow{$10 \times 10$}{10\_10}
    \resamplingdow{$20 \times 20$}{20\_20}
    \resamplingdow{$60 \times 60$}{60\_60}
    \bottomrule
  \end{tabular}

  }
  \caption{\label{tab:agr-sizes}Results of the resampling experiment for elections of different sizes.}
\end{table*}

\Cref{tab:agr-sizes} shows the results of the resampling experiment
for agreement indices and elections of different sizes.

\section{Missing Proofs}

Given two
votes, $u$ and $v$, let $n_{11}, n_{10}, n_{01}$, and
$n_{00}$ be the number of candidates approved in both $u$ and $v$, only
$u$, only $v$, and neither of them.
Then we have:
\begin{align}
  \textstyle
  \pcc(u,v) = \frac{n_{00} \cdot n_{11} - n_{01} \cdot n_{10}}{\sqrt{(n_{10} + n_{11}) \cdot (n_{00} + n_{01}) \cdot (n_{01} + n_{11}) \cdot (n_{00} + n_{10})}}.
  \label{eq:binary-pcc}
\end{align}

\subsection[Proof of Proposition~\ref{pro:av-agr-characterization}]{Proof of \Cref{pro:av-agr-characterization}}
Let $E = (C,V)$ be an election. By definition of the Approval Agreement, we have
that:
\[\textstyle
  \appragr(E) = \frac{1}{|C|}\sum_{c \in C}\left|1-2\frac{|A(c)|}{|V|}\right|.
\]
Hence, the value of $\appragr(E)$ is $1$ if and only if for each
candidate $c \in C$ it holds that
$\left|1-2\frac{|A(c)|}{|V|}\right|=1$, which happens exactly if
$|A(c)|=0$ or $|A(c)|=|V|$. This means that $\appragr(E)=1$ if and
only if $E$ is an identity election. On the other had,
$\appragr(E) = 0$ if and only if for each candidate $c \in C$ we have
$\left|1-2\frac{|A(c)|}{|V|}\right|=0$, which happens exactly when
$2\frac{|A(c)|}{|V|} = 1$, i.e., $|A(c)| = \nicefrac{|V|}{2}$.~\qed

\subsection[Proof of Proposition~\ref{pro:central-characterization}]{Proof of \Cref{pro:central-characterization}}
Let $E = (C,V)$ be an election. By definition of Central Agreement, we have
that:
\[
  \cntragr(E)= 1 - \frac{\sum_{v_i \in V} \ham(v_i, u)}{|V| \cdot \min(\avl(E), \revavl(E))}.
\]
Hence, $\cntragr(E)$ is equal to $1$ exactly if every summand is equal
to $0$, which happens when all the votes are identical. Hence,
$\cntragr(E) = 1$ if and only if $E$ is an identity election.

To characterize when $\cntragr(E) = 0$, let us first introduce an auxiliary
function. By central Hamming distance (CHD) of election $E = (C,V)$ we mean:
\[\textstyle
  \chd(E) = \sum_{v_i \in V} \ham(v_i, u).
\]
Naturally, we have:
\[ \cntragr(E) = 1 - \frac{\chd(E)}{|V| \cdot \min(\avl(E),
    \revavl(E))}.
\]
Let us rephrase $\chd(E)$ as follows: Given an
election $E = (C,V)$ and one of its central votes $u \in \cen(E)$, for
each candidate $c_j$, we define
$\chd(c_j) = \sum_{v_i \in V} |v_i[j]-u[j]|$. Then, by changing the
order of summation, we can have:
\begin{align}
  \chd(E) &\textstyle = \sum_{v_i \in V} \ham(v_i, u) \nonumber \\ 
          &\textstyle = \sum_{v_i \in V}\sum_{c_j \in C}|v_i[j]-u[j]| \label{eq:chd}\\
          &\textstyle =\sum_{c_j \in C}\sum_{v_i \in V} |v_i[j]-u[j]| = \sum_{c_j \in C} \chd(c_j).\nonumber
\end{align}
Using this formulation, we characterize the elections with a given
size and saturation, for which $\chd(E)$ is largest (specifically, it
is easier to verify the claims in the proof of
Lemma~\ref{lem:chd-largest} by using \Cref{eq:chd}).
\begin{lemma}\label{lem:chd-largest}
  The value $\chd(E)$ is largest among elections with $m$ candidates,
  $n$ voters, and saturation $\alpha$ if and only if (a) all
  candidates are approved in some vote in $\cen(E)$, or (b) all
  candidates are disapproved in some vote in $\cen(E)$, or (c) for
  each candidate $c_j$ we have that
  $|A(c_j)| \in \{ \lfloor \nicefrac{n}{2} \rfloor, \lceil
  \nicefrac{n}{2} \rceil\}$.
\end{lemma}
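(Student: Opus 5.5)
Write $a_j = |A(c_j)|$. Since $u$ agrees with the majority on $c_j$, we have $\chd(c_j) = \min(a_j, n - a_j)$. Hence, by \Cref{eq:chd},
\[
  \chd(E) = \sum_{c_j \in C} \min(a_j, n-a_j).
\]
Fixing $m$, $n$ and saturation $\alpha$ amounts to fixing $S = \sum_j a_j = \alpha m n$ (with $0 \le a_j \le n$). Conversely, every integer vector $(a_1,\dots,a_m)$ with these constraints is realized by some election, since each candidate can be approved by an arbitrary set of voters of the given size. So the lemma reduces to a purely combinatorial question: characterize the maximizers of $F(a) = \sum_j \min(a_j, n-a_j)$ subject to $\sum_j a_j = S$ and $a_j \in \{0,\dots,n\}$.

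\textbf{Step 1: the upper bound.} We have two trivial bounds, $F(a) \le \sum_j a_j = S$ and $F(a) \le \sum_j (n - a_j) = mn - S$, together with the termwise bound $\min(a_j, n-a_j) \le \lfloor n/2 \rfloor$. I would prove that
\[
  \max F = \min\bigl(S,\; mn - S,\; m\lfloor n/2\rfloor\bigr).
\]
To show this value is attained, start from any feasible vector and move mass greedily between coordinates toward the interval $[\lfloor n/2\rfloor, \lceil n/2\rceil]$. Alternatively, construct the maximizer directly. If $S \le m\lfloor n/2\rfloor$, spread $S$ so that every $a_j \le \lfloor n/2 \rfloor$; then $F = S$. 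The case $S \ge m\lceil n/2\rceil$ is symmetric. In the remaining case, choose each $a_j \in \{\lfloor n/2\rfloor, \lceil n/2\rceil\}$; then $F = m\lfloor n/2\rfloor$.

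\textbf{Step 2: characterizing equality.} I would do this case by case.

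\emph{Equality $F = S$.} This holds iff $\min(a_j, n-a_j) = a_j$ for all $j$, i.e., $a_j \le n/2$ for every $j$. That is exactly the condition that the all-disapproving vote agrees with at least half the voters on each candidate, i.e., the all-zero vote lies in $\cen(E)$. This gives condition (b).

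\emph{Equality $F = mn - S$.} Symmetrically, this holds iff $a_j \ge n/2$ for every $j$, i.e., the all-ones vote is central. This gives condition (a).

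\emph{Equality $F = m\lfloor n/2\rfloor$.} This holds iff every term equals $\lfloor n/2\rfloor$, i.e., $a_j \in \{\lfloor n/2\rfloor, \lceil n/2\rceil\}$ for every $j$. This gives condition (c).

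\textbf{Step 3: combining the cases.} Since the maximum is the minimum of the three bounds, an election is a maximizer iff it attains equality in whichever bound is smallest. The key observation is that any of (a), (b), (c) forces $F$ to equal the corresponding bound, which is at least the true maximum, so $F$ equals the maximum. Conversely, a maximizer attains the smallest of the three bounds, and therefore satisfies the matching condition.

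\textbf{Main obstacle.} The main care is needed when the minimum of the three bounds is not the bound whose equality case holds. For example, suppose (b) holds, so $F = S$. I must check that $S \le m\lfloor n/2\rfloor$ and $S \le mn - S$ automatically. Both follow because $a_j \le n/2$ forces $a_j \le \lfloor n/2\rfloor$ (each $a_j$ is an integer) and $a_j \le n - a_j$. Hence, under (b), $F$ equals the minimum of the three bounds, i.e., the maximum. Case (a) is symmetric, and case (c) is immediate since $\lfloor n/2 \rfloor \le \min(a_j, n-a_j)$ there, so $m\lfloor n/2\rfloor \le \min(S, mn-S)$.

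This settles the lemma, and I expect the verification of the attainability in Step 1 to be the only mildly fiddly part. I will also read "largest among elections with given $m$, $n$, $\alpha$" as ranging over the integer-feasible score vectors described above.
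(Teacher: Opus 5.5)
Your proof is correct, and it takes a different route from the paper's.

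\textbf{The paper's route.} It argues by local exchanges. Starting from an election satisfying (a), (b), or (c), it moves single approvals between candidates. It observes that $\chd$ stays constant until a receiving candidate crosses the majority threshold, and can only decrease afterwards. Since every election of the given size and saturation is reachable by such moves, the starting election must be a maximizer.

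\textbf{Your route.} You pass to the score vector and write $\chd(E)=\sum_j\min(a_j,n-a_j)$. You then compute the optimum in closed form as $\min(S,\,mn-S,\,m\lfloor n/2\rfloor)$, using three termwise upper bounds plus explicit constructions. Conditions (b), (a), (c) are then exactly the equality cases of these three bounds.

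\textbf{What your route buys.} It gives both directions cleanly. This matters most for the ``only if'' direction, which the paper's exchange argument handles only implicitly, through its closing remark about reachability. It also yields the maximum value explicitly.

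There is a side benefit as well. Since $S=|V|\avl(E)$ and $mn-S=|V|\revavl(E)$, your formulation shows at once that $\chd(E)=|V|\min(\avl(E),\revavl(E))$ exactly when (a) or (b) holds. That is the content of the $E_{<},E_{>},E_{=}$ argument the paper gives afterwards in the proof of \Cref{pro:central-characterization}.

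\textbf{A small simplification.} The ``main obstacle'' you flag is not really one. $F$ is bounded by each of the three quantities, so if it equals any one of them it equals their minimum, i.e., the maximum. Attainability (Step~1) is needed only for the converse direction.
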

\begin{proof}
  Let us first consider case (b). We know that there is an election
  $E$ with $m$ candidates, $n$ voters, and saturation
  $\satr(E) = \alpha$, such that all candidates are disapproved in
  some central vote in $\cen(E)$.  We arbitrarily partition the
  candidates into those that can receive approvals and those that can
  provide them. Then we perform a sequence of operations where, in
  each of them, we move a single approval from a providing candidate
  to a receiving one. We observe that the $\CHD$ values remains
  constant, until the first operation where for the receiving
  candidate $c_k$ all the central votes for the current election
  approve it. When this happens, the value of $\CHD$ drops by
  one. From this point on, each operation either maintains the current
  value of $\CHD$ or decreases it further. Since, with an appropriate
  sequence of operations and an appropriate choice of providing and
  receiving candidates, we can reach an arbitrary elections with the
  given size and saturation, it must be the case that the starting
  election had the maximum value of $\CHD$. The same reasoning applies
  to case (a), but for moving disapprovals rather than approvals.

  Let us consider case (c), assuming that $n$ is odd (if it were even,
  then cases (a) and (b) would have held as well and we have handled
  them already). Consider an analogous sequence of operations as in
  the preceding paragraph, except that the candidates who are
  originally approved by $\lceil \nicefrac{n}{2} \rceil$ voters (call
  them \emph{big} candidates) cannot give approvals to those, who were
  originally approved by $\lfloor \nicefrac{n}{2} \rfloor$ voters
  (call them \emph{small} candidates; this is w.l.o.g., as if we
  wanted to make such swaps, then we could have simply permuted the
  candidates). Now we observe that whenever an approval is moved from
  a small candidate or to a big candidate, then the $\CHD$ value of
  the election decreases. Yet, all our moves are of this form, so the
  original CHD value must have been maximal.

  Note that every election can be reached by moving approvals from one
  in cases (a), (b), or (c). This completes the proof.
\end{proof}

Next, let us consider some election $E = (C,V)$, such that there is a vote
$u \in \cen(E)$ where all candidates are disapproved. By the above lemma, we
know that its $\chd(E)$ value is largest for its size and saturation. Further,
we note that:
\begin{align*}
  \chd(E) &= \textstyle \sum_{v_i \in V} \ham(v_i, u) \\
          &= \textstyle \sum_{v_i \in V} |A(v_i)| = |V| \cdot \avl(E).
\end{align*}
It is also immediate that $|V| \cdot \avl(E) \leq |V| \cdot \revavl(E)$ in this
case. Similarly, if there is $u \in \cen(E)$ that approves all the
candidates, then we have $\chd(E) = |V| \cdot \revavl(E)$ and
$\revavl(E) \leq \avl(E)$. This means that if either all candidates in
$E$ are approved by at least half of the voters or all candidates in
$E$ are disapproved by at least half of the voters, then
$\cntragr(E) = 0$.

To finish the proof, it remains to consider an election $E = (C,V)$
where some candidates are approved by fewer than half of the voters
(denote the set of these candidates as $C_{<}$), some candidates are
approved by more than half of the voters (denote the set of these
candidates as $C_{>}$), and some (possibly zero) are approved by
exactly half of the voters (denote the set of these candidates as
$C_{=}$). Let us form $E_{<} = (C_{<},V)$, $E_{>} = (C_{>},V)$ and
$E_{=} = (C_{=},V)$. The following holds (note that
$\chd(E_{=}) = |V|\avl(E_{=})=|V|\revavl(E_{=})$):
\begin{align*}
  \chd(E) & = \chd(E_{<}) + \chd(E_{>}) + \chd(E_{=}) \\
          & = |V|(\avl(E_{<}) + \revavl(E_{>}) + \avl(E_{=}))
\end{align*}
One can easily verify that this value is strictly smaller than both
$|V|\avl(E)$ and $|V|\revavl(E)$ (the former inequality holds because
$\avl(E) = \avl(E_{<})+\avl(E_{>})+\avl(E_{=})$ while
$\revavl(E_{>}) < \avl(E_{<})$; the latter inequality holds for
analogous reasons).  Consequently, $\cntragr(E) > 0$.~\qed

\subsection[Proof of Proposition~\ref{pro:cntr-value}]{Proof of \Cref{pro:cntr-value}}

We use the notation as in the proof of
\Cref{pro:central-characterization}, in
Equation~\eqref{eq:chd}. Consequently, we have election $E = (C,V)$
and $\chd(E) = \sum_{c_j \in C}\chd(c_j)$, where:
\[
  \chd(c_j) = \sum_{v_i \in V}|v_i[j] - u[j]|
\]
for some $u \in \cen(E)$. We note that:
\[
  \chd(c_j) = \begin{cases}
    |A(c_j)| & \text{if } u[j] = 0,\\
    |V|-|A(c_j)| & \text{if } u[j]=1.
  \end{cases}
\]
Further, we have that $u[j] = 0$ whenever $|A(c_j)|\leq |V|/2$ and
$u[j]=1$ otherwise (strictly speaking, if $|A(c_j)|$ were equal to
$|V|/2$ then we could also take $u[j]=1$ in this case, but this
would not change the logic of the argument). Consequently, we have
that:
\[
  \chd(c_j) = \textstyle \frac{|V|}{2} - \left|\frac{|V|}{2} - |A(c_j)|\right|.
\]
After substituting this to the formula for $\cntragr(E)$, we get:
\begin{align*}
  \cntragr(E) &= \textstyle  1 - \frac{\chd(E)}{|V|\cdot \min(\avl(E),\revavl(E))} \\
              &= \textstyle  1 - \frac{\sum_{c_j \in C}\left(\frac{|V|}{2} - \left|\frac{|V|}{2} - |A(c_j)|\right| \right)}{|V| \cdot \min(\avl(E),\revavl(E))}\\
              &= \textstyle  1 - \frac{\frac{|V|}{2}\sum_{c_j \in C}\left( 1 - \left|1 - \frac{2|A(c_j)|}{|V|}\right| \right)}{|V| \cdot \min(\avl(E),\revavl(E))}\\
              &= \textstyle  1 - \frac{\sum_{c_j \in C} \left(1 - \left|1 - \frac{2|A(c_j)|}{|V|}\right| \right)}{2 \cdot \min(\avl(E),\revavl(E))}
\end{align*}
This completes the proof.~\qed

\subsection[Proof of Proposition~\ref{pro:pcc-is-index}]{Proof of \Cref{pro:pcc-is-index}}
This result is, in essence, folk knowledge. We provide an argument for
the sake of completeness. Let $x$ and $y$ be two vectors of dimension
$m$. We form two new vectors, $x'$ and $y'$, such that for each $i \in [m]$
we have:
\begin{align*}
  x'[i] = \textstyle \frac{x[i]-\overline{x}[i]}{\sqrt{\sum_{j=1}^m(x[i]-\overline{x}[i])^2}} \text{ and }
  y'[i] = \textstyle \frac{y[i]-\overline{y}[i]}{\sqrt{\sum_{j=1}^m(y[i]-\overline{y}[i])^2}}.
\end{align*}
We observe that $\pcc(x,y)$ is equal to the dot product of $x'$ and
$y'$.

Now let us consider some election $E = (C,V)$, where
$V = (v_1, \ldots, v_n)$. We form an $n \times n$ matrix $M$, where
entry in the $i$-th row and $j$-th column is $\pcc(v_i,v_j)$. By the
argument from the previous paragraph, we know that $M$ is a Gram
matrix (i.e., a matrix of dot products of a sequence of
vectors). Since all Gram matrices are semi-definite positive, and sums
of entries of all semi-definite positive matrices are nonnegative, we
have that:
\[
  \sum_{v_i \in V}\sum_{v_j \in V}\pcc(v_i,v_j) \geq 0.
\]
On the other hand, since for each two votes $v_i$ and $v_j$ we have that $\pcc(v_i,v_j) \leq 1$,
it is immediate that:
\[
  \sum_{v_i \in V}\sum_{v_j \in V}\pcc(v_i,v_j) \leq n^2.
\]
Together, this means that $\pccagr(E) \in [0,1]$.~\qed

\subsection[Proof of Theorem~\ref{thm:pairwise-value}]{Proof of \Cref{thm:pairwise-value}}
Let $E = (C,V)$ be an election with candidate set
$C = \{c_1, \ldots, c_m\}$.
We define its expected Hamming distance
(EHD) as follows:
\[
  \textstyle \ehd(E) = \sum_{u \in V}\sum_{v \in V} \ham(u,v).
\]
Naturally, we have:
\[
  \pairagr(E) = \textstyle \nicefrac{\ehd(E)}{2|V|^2|C| \cdot \satr(E)(1-\satr(E))}.
\]
By definition of $\ehd(E)$ and changing the order of summation, we
have:
\begin{align*}
  \ehd(E) &= \textstyle  \sum_{u \in V}\sum_{v \in V}\ham(u,v)\\
          &= \textstyle  \sum_{u \in V}\sum_{v \in V}\sum_{c_j \in C}|u[j] - v[j]|\\
          &= \textstyle  \sum_{c_j \in C}\sum_{u \in V}\sum_{v \in V}|u[j] - v[j]|.
\end{align*}
Now we note that for each candidate $c_j$, there are $|A(c_j)|$ voters
who approve it and $n-|A(c_j)|$ who do not.  Hence, for a given
candidate $c_j \in C$, we have that:
\begin{align*}
  \textstyle \sum_{u \in V}&\textstyle\sum_{v \in V}|u[j] - v[j]|\\
                             &=\textstyle |A(c_j)| \cdot \sum_{v \in V}|1 - v[j]| \\
                             &+\textstyle (n-|A(c_j)|) \cdot \sum_{v \in V}|0-v[j]|\\
                             &=|A(c_j)|(n-|A(c_j)|) + (n-|A(c_j)|)|A(c_j)|\\
                             &= 2|A(c_j)|(n-|A(c_j)|).
\end{align*}
By substituting this equality into the definition of $\ehd(E)$ we obtain:
\[\textstyle
  \ehd(E) = \sum_{c_j \in C}2|A(c_j)|(n-|A(c_j)|),
\]
which substituted to the formula for $\pairagr(E)$ gives:
\[
  \pairagr(E) = \textstyle \frac{\sum_{c_j \in C}|A(c_j)|(n-|A(c_j)|)}{|V|^2 |C| \cdot \satr(E)(1-\satr(E))}.
\]
This completes the proof.~\qed

\subsection[Proof of Theorem~\ref{thm:pairwise-characterization}]{Proof of \Cref{thm:pairwise-characterization}}
It is easy to verify that, indeed, each of the four indices takes
value $1$ if and only if the input election is an identity
one. Further, closer inspection of the definition of Jaccard Pairwise
Agreement indicates that it assumes value $0$ if and only if all
voters approve disjoint sets of candidates (otherwise one of the
summands in its definition would be greater than $0$). It remains to
consider when Hamming Pairwise Agreement assumes value $0$.

We use the same notation as in the proof of \Cref{thm:pairwise-value}
and we first prove the next lemma, which characterizes for which
elections $E$ of a given size and saturation value $\ehd(E)$ is
largest.

\begin{lemma}\label{lem:ehd-largest}
  Let $m$ and $n$ be two positive integers, and let $\alpha \in [0,1]$
  be a rational number such that $\alpha \cdot n m$ is an integer.
  Let $E = (C,V)$ be an election with $m$ candidates, $n$ voters, and
  $\satr(E) = \alpha$. Value $\ehd(E)$ is largest among elections of
  its size and saturation if for each two candidates $c_j, c_k \in C$,
  $\big||A(c_j)|-|A(c_k)|\big|\leq 1$.
\end{lemma}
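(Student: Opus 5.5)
We use the formula from the proof of \Cref{thm:pairwise-value}:
\[
  \ehd(E) = 2\sum_{c_j \in C} |A(c_j)|\,(n-|A(c_j)|).
\]
So $\ehd(E)$ depends only on the vector of approval scores $s_j = |A(c_j)|$. The saturation constraint says exactly that $\sum_{j} s_j = \alpha nm$, with each $s_j \in \{0,1,\dots,n\}$. Conversely, every integer vector satisfying these constraints is realized by some election with $m$ candidates and $n$ voters (give candidate $c_j$ approvals from any $s_j$ voters). Hence the lemma reduces to a purely numerical claim: among integer vectors $(s_1,\dots,s_m)$ with entries in $[0,n]$ and fixed sum $S=\alpha nm$, the function
\[
  F(s) = \sum_j g(s_j), \qquad g(x) = x(n-x),
\]
is maximized by the balanced vectors, i.e., those with $|s_j - s_k| \le 1$ for all $j,k$.

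The key step is a standard exchange (smoothing) argument based on concavity of $g$. Suppose $s_j \ge s_k + 2$, and move one unit from $s_j$ to $s_k$. This stays within $[0,n]$ and preserves the sum. The change in $F$ is
\[
  g(s_j-1)-g(s_j)+g(s_k+1)-g(s_k) = (2s_j - n - 1) + (n - 2s_k - 1) = 2(s_j - s_k) - 2 \ge 2 > 0 .
\]
Hence any unbalanced vector is strictly improvable, so every maximizer must be balanced. A maximizer exists because the feasible set is finite and nonempty.

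Finally, all balanced vectors with sum $S$ are permutations of one another: each has $S \bmod m$ entries equal to $\lceil S/m\rceil$ and the remaining entries equal to $\lfloor S/m\rfloor$. Therefore they all give the same value of $F$. Since some maximizer is balanced, every balanced vector attains the maximum, which is the claimed statement (in fact in its "if and only if" form). The only mildly delicate point is this last step: ensuring that the conclusion covers all balanced configurations and not just the one reached by smoothing. The permutation-invariance of $F$ settles it.
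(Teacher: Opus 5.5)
Your proof is correct and follows essentially the same route as the paper: write $\ehd$ as a sum of a concave function of the approval scores, then show that moving one approval from a candidate with at least two more approvals to the other strictly increases it. Your explicit computation of the change, $2(s_j-s_k)-2>0$, replaces the paper's derivative-based case analysis on where $p_j,p_k$ lie relative to $1/2$, and your remark that all balanced score vectors are permutations of one another spells out the final step the paper leaves implicit.
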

\begin{proof}
  Let $E = (C,V)$ be an arbitrary election such that
  $C = \{c_1, \ldots, c_m\}$, $V = (v_1, \ldots, v_n)$, and
  $\satr(E) = \alpha$. We claim that if there are two candidates
  $c_j, c_k \in C$ such that $c_k$ has at least two more approvals
  than $c_j$ then by swapping two approvals within $E$ we can obtain a
  new election $E'$ such that $\ehd(E') > \ehd(E)$.

  Let us first express $\ehd(E)$ in a convenient way. Let us define
  function $f(x) = x(1-x)$ and for each candidate $c_t \in C$, let
  $p_t$ be its approval probability, i.e.,
  $p_t = \nicefrac{|A(c_t)|}{n}$. Now, using the formula for $\ehd(E)$
  given in the proof of \Cref{thm:pairwise-value}, we have:
  \begin{align*}
    \ehd(E) &\textstyle = \frac{2}{n^2}\sum_{c_t \in C} n\cdot f(p_t)\\
            &\textstyle = \frac{2}{n}\sum_{c_t \in C} f(p_t)\\
            &\textstyle = \frac{2}{n}\left( f(p_j) + f(p_k) \right) + X(E),
  \end{align*}
  where
  $X(E) = \frac{2}{n}\sum_{c_t \in C \setminus\{c_j,c_k\}}
  f(p_t)$. Next, we form election $E'$ that is identical to $E$
  except that we remove a single approval from $c_k$ and add a single
  approval to $c_j$. We have that:
  \begin{align*}
    \textstyle
     \ehd(E')= \frac{2}{n}\left( f(p_j+\nicefrac{1}{n}) + f(p_k-\nicefrac{1}{n}) \right) + X(E),
  \end{align*}
  To establish that $\ehd(E') > \ehd(E)$, we need to show that
  \begin{equation}
    f(p_j+\nicefrac{1}{n}) - f(p_j) > f(p_k) - f(p_k - \nicefrac{1}{n}).
    \label{eq:ehd}
  \end{equation}
  If we have that both $p_j, p_k \leq 0.5$ or both $p_j,p_k \geq 0.5$
  then it is clear that this holds (as the derivative of $f(x)$ is
  strictly decreasing and assumes value $0$ for $x=0.5$). On the other
  hand, if $p_j < 0.5$ and $p_k > 0.5$ then either the left-hand side
  of \eqref{eq:ehd} is positive and the right-hand side is at most
  zero, or the left-hand side if zero and the right-hand size is
  negative (this follows from the shape of function $f(x)$, which
  grows for $x \in [0,0.5]$ and then symmetrically decreases for
  $x \in [0.5,1]$, and the fact that $c_k$ has at least two more
  approvals than $c_j$ in $E$).

  By repeating the above reasoning, we find that elections where the
  numbers of approvals received by some two candidates differ by more
  than one cannot have the largest $\EHD$ value. By
  \Cref{thm:pairwise-value} we know that an election's $\EHD$ value---for a
  given number of voters---depends only on the candidates' approval
  scores. Consequently, an election has the highest $\EHD$ value among
  those with a given numbers of candidates and voters, and a given
  saturation, if and only if approval scores of its candidates differ
  at most by one.
\end{proof}

Now let us consider an election $E = (C,V)$, where each candidate has
the same approval score equal to $T$. In such an election, we have
$\satr(E) = T/|V|$. By \Cref{thm:pairwise-value} and substituting
candidates' approval scores into the formula, we have that:
\begin{align*}
  \pairagr(E) &=\textstyle  1 - \frac{|C|T(|V|-T)}{|V|^2 |C| \cdot \satr(E)(1-\satr(E))}\\
  &=\textstyle  1 - \frac{|C|T(|V|-T)}{|V|^2 |C| \cdot T/|V|(1-T/|V|)} \\
  &=\textstyle  1 - \frac{|C|T(|V|-T)}{T(|V|-T)} = 0.
\end{align*}
Next, we observe that for an election where candidates' approval
scores differ by at most one (and, indeed, there are two candidates
with different scores) it holds that $\pairagr(E) > 0$. To this end,
let us consider election $E = (C,V)$ with $m$ candidates and $n$
voters, where $T$ voters approve all the candidates, $1$ voter
approves $X < m$ candidates, and remaining $n-T-1$ voters do not
approve any candidates. We have:
\begin{align*}
  \ehd&(E) = T(0 + (m-X) + m(n-T-1)) \\
      &+ T(m-X) + 0 + X(n-T-1) \\
      &+ (n-T-1)(mT + X + 0) \\
      &= 2T(m-X) + 2mT(n-T-1) + 2X(n-T-1)\\
      &= 2mT -2TX + 2nmT -2mT^2 \\
      &-2mT +2Xn -2TX -2X \\
      &=2nmT -4TX - 2mT^2 + 2Xn -2X\\
      &=\textstyle n^2m(\frac{2T}{n} - \frac{4TX}{n^2m} -\frac{2T^2}{n^2} + \frac{2X}{nm} - \frac{2X}{n^2m}).
\end{align*}
For election $E$ we have that $\satr(E) = \frac{mT+X}{nm}$, so:
\begin{align*} 
  \pairagr(E) &=\textstyle 1 - \frac{n^2m(\frac{2T}{n} - \frac{4TX}{n^2m} -\frac{2T^2}{n^2} + \frac{2X}{nm} - \frac{2X}{n^2m})}{2n^2m(\frac{mT+X}{nm})(1-\frac{mT+X}{nm})} \\
              &=\textstyle 1 - \frac{\frac{2T}{n} - \frac{4TX}{n^2m} -\frac{2T^2}{n^2} + \frac{2X}{nm} - \frac{2X}{n^2m}}{2(\frac{mT+X}{nm})(1-\frac{mT+X}{nm})}\\
              &=\textstyle 1 - \frac{\frac{2T}{n} - \frac{4TX}{n^2m} -\frac{2T^2}{n^2} + \frac{2X}{nm} - \frac{2X}{n^2m}}{\frac{2T}{n}-\frac{4TX}{n^2m}-\frac{2T^2}{n^2}+\frac{2X}{nm}-\frac{2X^2}{n^2m^2}} > 0.
\end{align*}
The final inequality follows by computing the numerator and
denominator. This concludes the proof as, by \Cref{lem:ehd-largest},
for every other election of this size and saturation, $\pairagr(E)$ is even larger.

\subsection[Proof of Proposition~\ref{pro:pcc}]{Proof of \Cref{pro:pcc}}
Let $E = (C,V)$ be a $k$-Party election. We write $m$ and $n$ to
denote $|C|$ and $|V|$, and we assume that both $m$ and $n$ are
divisible by $k$. Given two votes, $u$ and $v$ in $V$, that approve
the same groups of $\nicefrac{m}{k}$ candidates, we have that:
\[
  \pcc(u,v) = 1.
\]
On the other hand, if $u$ and $v$ approve disjoint sets of $\nicefrac{m}{k}$
candidates each, then, by Equation~\eqref{eq:binary-pcc} we have:
\begin{align*}
  \pcc(u,v) = \textstyle \frac{-\frac{m}{k}\cdot \frac{m}{k}}{\sqrt{\frac{m}{k}\cdot(m-\frac{m}{k})\cdot\frac{m}{k}\cdot(m-\frac{m}{k})}} 
   = \textstyle -\frac{\frac{m^2}{k^2}}{\frac{m^2}{k}(1-\frac{1}{k})} = -\frac{1}{k}\frac{1}{1-\frac{1}{k}}.
\end{align*}
Then, we have that (the second equality follows by substituting the
above values for PCCs):
\begin{align*}
  \pccagr(E) &=\textstyle \frac{1}{n^2}\sum_{v_i \in V}\sum_{v_j \in V} \pcc(v_i,v_j)\\
             &=\textstyle \frac{1}{n^2}\cdot n \cdot( \frac{n}{k} - (n-\frac{n}{k})\frac{1}{k}\frac{1}{1-\frac{1}{k}})\\
             &=\textstyle (\frac{1}{k} - \frac{1}{k}) = 0.
\end{align*}
Similarly, we have:
\begin{align*}
  \pospccagr(E) &=\textstyle \frac{1}{n^2}\sum_{v_i \in V}\sum_{v_j \in V} \max(0,\pcc(v_i,v_j))\\
             &=\textstyle \frac{1}{n^2}\cdot n \cdot \frac{n}{k} = \frac{1}{k}.
\end{align*}
This completes the proof.~\qed

\subsection[Proof of Proposition~\ref{prop:pairagr-pccagr-sometimes-equal}]{Proof of \Cref{prop:pairagr-pccagr-sometimes-equal}}

Let $E=(C,V)$ be an election such that $0 < |A(u)| = |A(v)| < 1$ for all $u,v \in V$.

First, we derive an explicit dependence between the $\pcc(u,v)$ and $\ham(u,v)$ for any pair of $u,v \in V$.
We note that an implicit dependence has been known in the literature on association coefficients (see, e.g., \cite{war:j:association-coeff});
for completeness we provide the exact formula in our notation.

Specifically, \citet{war:j:association-coeff} considers Yule's $\phi$ coefficient
(denoted there by $S_{\textrm{Yule1}}$) \cite{yul:j:phi-coefficient},
which coincides with PCC for binary vectors,
and also the simple matching coefficient $S_{\textrm{SM}}$~\cite{sok-mic:j:association-btw-attributes},
which in our notation equals to the number of matches, i.e.,
$S_{\textrm{SM}}(u,v) = m-\ham(u,v)$.
\citet{war:j:association-coeff} shows that,
under fixed marginals,
$S_{\textrm{Yule1}}$ can be expressed as an affine transformation of $S_{\textrm{SM}}$,
but do not provide an explicit formula,
which is as follows.

\begin{proposition}\label{prop:pcc-ham-relation}
  If $0 < |A(u)| = |A(v)| = pm < m$ holds then we have
  $\pcc(u,v) = 1 - \frac{\ham(u,v)}{2m(1-p)p}$.
\end{proposition}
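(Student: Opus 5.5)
We work directly with the binary form of PCC in Equation~\eqref{eq:binary-pcc}. Write $a = pm$ for the common number of approvals. The plan is to express all four counts $n_{11}, n_{10}, n_{01}, n_{00}$ through $a$, $m$ and $h = \ham(u,v)$, substitute them into \eqref{eq:binary-pcc}, and simplify.

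First, the counts. Since $|A(u)| = |A(v)| = a$, we have $n_{11}+n_{10} = a$ and $n_{11}+n_{01} = a$, so $n_{10} = n_{01}$. Because $h = n_{10}+n_{01}$, this gives $n_{10} = n_{01} = h/2$. Consequently $n_{11} = a - h/2$ and $n_{00} = m - a - h/2$.

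Second, the denominator. Its four factors are $(n_{10}+n_{11}) = a$, $(n_{01}+n_{11}) = a$, $(n_{00}+n_{01}) = m-a$ and $(n_{00}+n_{10}) = m-a$. The square root is therefore exactly $a(m-a)$. It is nonzero by the hypothesis $0<a<m$, so the special convention for a zero denominator does not apply.

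Third, the numerator:
\[
n_{00}n_{11} - n_{01}n_{10} = (m-a-\tfrac{h}{2})(a-\tfrac{h}{2}) - \tfrac{h^2}{4}.
\]
When expanded, the $h^2/4$ terms cancel, leaving $a(m-a) - \tfrac{h}{2}m$.

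Dividing by $a(m-a)$ gives $1 - \frac{hm}{2a(m-a)}$. Substituting $a = pm$ yields $1 - \frac{h}{2m p(1-p)}$, which is the claimed formula.

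There is no real obstacle here. The one step to handle carefully is the deduction $n_{10} = n_{01} = h/2$ from the equal vote lengths, since it is what makes the denominator collapse and the quadratic terms in $h$ cancel. The rest is bookkeeping.
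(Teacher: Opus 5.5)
Your proof is correct and takes essentially the same approach as the paper: both substitute the equal-marginal constraints into the binary form \eqref{eq:binary-pcc}, note that the denominator collapses to $pm(m-pm)$, and simplify the numerator. The only cosmetic difference is that the paper first writes everything in terms of $n_{11}$ and substitutes $n_{11} = pm - \ham(u,v)/2$ at the end, whereas you parametrize by $h=\ham(u,v)$ from the start.
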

\begin{proof}
  Let $n_{11}, n_{10}, n_{01}, n_{00}$ denote the numbers of candidates approved in both $u$ and $v$,
  only in $u$, only in $v$, and in neither of them, respectively.
  By these definitions we have
  \begin{align}
    n_{11}+n_{10} = n_{11}+n_{01} = pm.\label{eq:nxy-marginal}
  \end{align}
  This implies:
  \begin{align*}
    n_{00} &= m - n_{11} -n_{10} -n_{01}\\
           &\stackrel{\eqref{eq:nxy-marginal}}{=} m - n_{11} -(pm -n_{11}) -(pm -n_{11})\\
           &= m -2pm + n_{11}.
  \end{align*}

  Now consider the numerator of $\pcc(u,v)$ (defined in \Cref{eq:binary-pcc}).
  \begin{align*}
      &n_{11} \cdot n_{00} - n_{01} \cdot n_{10}\\
    \stackrel{\eqref{eq:nxy-marginal}}{=}~&n_{11} \cdot (m -2pm + n_{11})  - (pm - n_{11})^2\\
    =~&n_{11} \cdot m -(pm)^2.
  \end{align*}

  The denominator of $\pcc(u,v)$ equals to:
  \begin{align*}
      & \sqrt{(n_{10} + n_{11}) \cdot (n_{00} + n_{01}) \cdot (n_{01} + n_{11}) \cdot (n_{00} + n_{10})}\\
      & = \sqrt{ pm \cdot (m-pm) \cdot pm \cdot (m-pm)}\\
      & = pm \cdot (m-pm) = m^2 \cdot p(1-p).
  \end{align*}

  Putting both together we obtain:
  \begin{align}
    \pcc(u,v) &= \frac{n_{11} \cdot m -(mp)^2}{m^2 \cdot p(1-p)}\nonumber\\
              &= \frac{n_{11} -mp^2}{m p(1-p)}.\label{eq:pcc-as-n11-function}
  \end{align}

  By the definition of the Hamming distance we have
  $
    \ham(u,v) = n_{10} + n_{01} = 2(pm-n_{11})
  $
  which is equivalent to
  \begin{align}
    n_{11} = pm -\frac{\ham(u,v)}{2}.\label{eq:n11-as-ham}
  \end{align}

  Putting these together we obtain the final equality:
  \begin{align*}
    \pcc(u,v) &\stackrel{\eqref{eq:pcc-as-n11-function},\eqref{eq:n11-as-ham}}{=} \frac{mp -\frac{\ham(u,v)}{2} -mp^2}{m p(1-p)}\\
    &\stackrel{\phantom{(6),(7)}}{=} \frac{mp-mp^2}{m p(1-p)} - \frac{\frac{\ham(u,v)}{2}}{m p(1-p)}\\
    &\stackrel{\phantom{(6),(7)}}{=} 1 - \frac{\ham(u,v)}{2m p(1-p)}.
  \end{align*}
\end{proof}

Since $|A(v)| = pm$ for all $v \in V$,
we have $\avl(E) = pm$,
and thus
$\satr(E) = \frac{\avl(E)}{m} = p$
and together with \Cref{prop:pcc-ham-relation} we can prove the final equality:
\begin{align*}
\pccagr(E) \stackrel{\textrm{Def.}~\ref{def:pair-agree-indices}}{=} \hspace{2pt}&\frac{1}{n^2}\sum_{u \in V}\sum_{v \in V}\pcc(u,v)\\
            \stackrel{\textrm{Prop.}~\ref{prop:pcc-ham-relation}}{=}
               \hspace{0pt}&\frac{1}{n^2}\sum_{u \in V}\sum_{v \in V} \left( 1 - \frac{\ham(u,v)}{2m p(1-p)} \right)\\
             = \hspace{11pt}&1 -\sum_{u \in V}\sum_{v \in V} \frac{\ham(u,v)}{2 n^2 m p(1-p)}\\
             \stackrel{\textrm{Def.}~\ref{def:pair-agree-indices}}{=} \hspace{2pt}&\pairagr(E).
\end{align*}
This completes the proof.~\qed

\subsection[Proof of Lemma~\ref{lem:ham:dist:single-vote:un}]{Proof of \Cref{lem:ham:dist:single-vote:un}}
\begin{proof}
    We will prove that both sides of the equation
    can be linked to the ratios
    of the sizes of two combinatorial sets
    defined as follows.
    Note that in $p\hbox{-}\mathcal{U}_C$
    every vote appears $a^r \cdot b^{|C|-r}$ times,
    where $a$ and $b$ are some integers such that $p=a/(a+b)$ and $r$ is the number of ones in the vote.
    Let $F$ be a set of functions from the set of $m$ numbers to the set of $a+b$ numbers, $f : [m] \rightarrow [a+b]$.
    Then, let $X$ be set of pairs $(x,f) \in [m] \times F$,
    such that $f(x) \in [a]$, if $x \in [(1-q)m]$,
    or $f(x) \in \{a+1,\dots,a+b\}$,
    if $x \in \{(1-q)m+1,\dots,m\}$.
    We claim that
    (1) $\ham(\{v\},p\hbox{-}\mathcal{U}_C) = |X|/(m\cdot|F|)$ and
    (2) $|X|/(m\cdot|F|) = p(1-q) + q(1-p).$

    (1)
    First, observe that
    \begin{equation}
    \label{eq:pUN:size}
        |F| = 
        (a+b)^m = 
        \sum_{k=0}^m \binom{m}{k} a^kb^{m-k} = 
        |p\hbox{-}\mathcal{U}_C|.
    \end{equation}
    
    Now, without loss of generality, let us assume
    that the first $(1-q)m$ positions of $v$ are $0$s
    and the last $qm$ positions are $1$s.
    According to the definition of $p\hbox{-}\mathcal{U}_C$,
    each vote $u \in \{0,1\}^m$ appears there
    $a^\alpha b^\beta$ times,
    where $\alpha$ is the number of ones in $u$
    and $\beta$ is the number of zeros.
    This corresponds to the number of functions $f$
    that give values in $[a]$ to exactly those arguments in $[m]$
    that belong to $A(u)$.
    Now, for each such function $f$,
    we can choose $x$ in the number of different ways
    that is equal to the number of indices in $A(u)$
    that are also in $[(1-q)m]$ plus the number of indices
    that are not in $A(u)$ but are in $\{(1-q)m+1,\dots,m\}$,
    i.e., the symmetric difference between $A(u)$ and $\{(1-q)m+1,\dots,m\}$.
    But this is exactly the Hamming distance between $u$ and $v$.
    Hence, $|X|$ is equal to the sum of the (unnormalized) Hamming distances between $v$ and each vote from $p\hbox{-}\mathcal{U}_C$.
    Combining this with \cref{eq:pUN:size}, we get that indeed
    $\ham(\{v\},p\hbox{-}\mathcal{U}_C) = |X|/(m\cdot|F|)$.

    (2)
    To express the size of $X$ in a different way,
    we first fix $x \in [m]$, and then select matching $f$.
    We split $X$ into two disjoint subsets depending on the value of $x$, in a way that $X_1 = \{(x,f) \in X : x \in [(1-q)m]\}$ and $X_2 = \{(x,f) \in X : x \in \{(1-q)m+1,\dots,m\}\}$.

    In $X_1$ we can select $x$ in $(1-q)m$ different ways,
    then $f(x)$ can have $a$ different values,
    and $f(y)$ for each $y \in [m] \setminus \{x\}$
    can have $(a+b)$ different values.
    Thus, $|X_1| = (1-q)m \cdot a \cdot (a+b)^{m-1}$.
    Analogously, $|X_2| = q \cdot m \cdot b \cdot (a+b)^{m-1}$.
    Since $X_1$ and $X_2$ are disjoint and
    $X = X_1 \cup X_2$, we get that
    $|X| = |X_1| + |X_2| =
    ((1-q)a + qb)m(a+b)^{m-1}$.
    Thus, by \cref{eq:pUN:size}
    \[
        \frac{|X|}{m|F|} = \frac{(1-q)a + qb}{a+b} = (1-q)p +q(1-p).
    \]
\end{proof}

\subsection[Proof of Proposition~\ref{prop:out-div:normalization}]{Proof of \Cref{prop:out-div:normalization}}
\begin{proof}
    Clearly, for every election $E = (C,V)$
    and $p = \satr(E)$,
    it holds that
    \(
        \ham(V, p\hbox{-}\mathcal{U}_C) \ge 0,
    \)
    hence $\outdiv_h^{\satr}(E) \le 1$.
    For the lower bound,
    by a series of (in)equalities, we will show that
    \(
        \ham(V, p\hbox{-}\mathcal{U}_C) \le 2(p - p^2),
    \)
    which will imply that $\outdiv_h^{\satr}(E) \ge 0$.

    Denote $V = \{v_1,\dots,v_n\}$.
    Let us start by considering copying $n$ times
    each vote in $p\hbox{-}\mathcal{U}_C$,
    we denote the resulting collection of votes as $n \cdot p\hbox{-}\mathcal{U}_C$.
    Observe that introducing these additional copies
    does not affect the Hamming distance from $V$
    (as otherwise it would mean that there is a fractional matching
    giving less total weight than an integral matching,
    which is not possible~\cite{lov-plu:b:matchings}).
    Hence,
    \[
        \ham(V,p\hbox{-}\mathcal{U}_C) = \ham(V, n \cdot p\hbox{-}\mathcal{U}_C).
    \]

    To compute $\ham(V, n \cdot p\hbox{-}\mathcal{U}_C)$ we match
    the votes from $V$ to those in $n \cdot p\hbox{-}\mathcal{U}_C$ in an optimal way.
    Another way of matching the votes would be to match each separate set of copies of votes in $p\hbox{-}\mathcal{U}_C$ to a single, each time different vote in $V$.
    Thus, we get
    \begin{align*}
        \ham(V, p\hbox{-}\mathcal{U}_C) &=
        \ham(V, n \cdot p\hbox{-}\mathcal{U}_C) \\ &\le
        \frac{1}{n} \sum_{i=1}^n \ham(\{v_i\}, p\hbox{-}\mathcal{U}_C).
    \end{align*}

    Now, let us denote a fraction of ones in $v_i$ by $q_i = |A(v_i)|/|C|$.
    From \Cref{lem:ham:dist:single-vote:un}, we get that
    \begin{align*}
        \ham(V, p\hbox{-}\mathcal{U}_C) &\le
        \frac{1}{n} \sum_{i=1}^n \ham(\{v_i\}, p\hbox{-}\mathcal{U}_C) \\ &=
        \frac{1}{n} \sum_{i=1}^n
            \big(p \cdot (1-q_i) + (1-p) \cdot q_i\big) \\ &=
        p \left(1 - \frac{1}{n} \sum_{i=1}^n q_i \right) +
        (1-p) \cdot \frac{1}{n} \sum_{i=1}^n q_i \\ &=
        2 p (1 - p),
    \end{align*}
    where the last equality we can obtain from the fact that
    $\sum_{i=1}^n q_i = \satr(E) = p$.
    This concludes the proof.
\end{proof}

\section{Further Data Analysis}
In this appendix, we elaborate on the measure of complementarity of computed indices, and provide correlation matrix of their values.

\subsection{Complementarity}
\label{app:complementarity}

\citet{fal-kac-sor-szu-was:c:div-agr-pol-map} suggest that the elections they consider approximately fit into a simplex in the space of their agreement, diversity, and polarization indices.
We would like to measure how much this is the case for all triples of our indices.
To this end, we use the following \emph{measure of complementarity}
(we note that similar measures have been used to measure a diversification of a portfolio in financial sciences, see e.g.,~\cite{cho-coi:j:diversification,han-lin-wan:j:diversification}).

\begin{definition}
    Let $\mathbf{x}=(x_1,\dots,x_n)$,
    $\mathbf{y}=(y_1,\dots,y_n)$, and
    $\mathbf{z}=(z_1,\dots,z_n)$,
    be three vectors of index values.
    We define their \emph{complementarity} as
    one minus the standard deviation
    of the sum of the indices
    divided by the sum of the standard deviation of the indices, i.e.,
    \[
        \cmpl(\mathbf{x},\mathbf{y},\mathbf{z}) = 1 - \frac{\std(x_1 + y_1 + z_1, \dots, x_n + y_n + z_n)}{\std(\mathbf{x}) + \std(\mathbf{y}) + \std(\mathbf{z})},
    \]
    where
    \[\textstyle
        \std(a_1,\dots,a_n) = \sqrt{\frac{1}{n}\sum_{i=1}^n(a_i - \frac{1}{n}\sum_{j=1}^na_j)^2}.
    \]
\end{definition}

Intuitively, complementarity measures how close is the sum of indices to being constant as compared to the variability of the indices alone.
If the sum is constant, complementarity would give value equal to 1.
On the other hand, for any vector $\mathbf{x} \in \mathbb{R}^n$,
we have $\cmpl(\mathbf{x},\mathbf{x},\mathbf{x})=0$.

We checked the values of the complementarity measure for all triples of our indices.
The four that gave the highest values where:
\begin{itemize}
\item $\pccagr$, $\pccdiv$, $\pccpol$: the value of $0.9016$,
\item $\cntragr$, $\cntrdiv$, $\pccpol$: the value of $0.8793$,
\item $\cntragr$, $\cntrdiv$, $\cntrpol$: the value of $0.8775$, and
\item $\pairagr$, $\cntrdiv$, $\pccpol$: the value of $0.8748$.
\end{itemize}
This motivated us to focus on the
$\pccagr$, $\pccdiv$, $\pccpol$ map in \Cref{sec:experiments}.

For comparison,
in the original paper by \citet{fal-kac-sor-szu-was:c:div-agr-pol-map} that introduced diversity-agreement-polarization framework, the complementarity of their indices for the ordinal elections in their \emph{extended dataset}
is equal to $0.8818$
(we have normalized the values of the indices to have the maximum value of $1$, as it is not the case there).
In the follow-up paper by
\citet{fal-mer-nun-szu-was:c:diff-sizes},
the complementarity of the indices of the election in the main dataset is equal to $0.9086$.

\begin{table*}[t]
\centering
\setlength{\tabcolsep}{0.2pt}
\small
\begin{tabular}{lccccccccccccc}
\toprule
& \multicolumn{1}{p{0.8cm}}{\rotatebox{53}{$\satr$}}
& \multicolumn{1}{p{0.8cm}}{\rotatebox{53}{$\appragr$}} 
& \multicolumn{1}{p{0.8cm}}{\rotatebox{53}{$\cntragr$}} 
& \multicolumn{1}{p{0.8cm}}{\rotatebox{53}{$\pairagr$}} 
& \multicolumn{1}{p{0.8cm}}{\rotatebox{53}{$\pccagr$}} 
& \multicolumn{1}{p{0.8cm}}{\rotatebox{53}{$\pospccagr$}} 
& \multicolumn{1}{p{0.8cm}}{\rotatebox{53}{$\jaccpairagr$}} 
& \multicolumn{1}{p{0.8cm}}{\rotatebox{53}{$\cntrdiv$}} 
& \multicolumn{1}{p{0.8cm}}{\rotatebox{53}{$\pccdiv$}} 
& \multicolumn{1}{p{0.8cm}}{\rotatebox{53}{$\outdiv$}} 
& \multicolumn{1}{p{0.8cm}}{\rotatebox{53}{$\cntrpol$}} 
& \multicolumn{1}{p{0.8cm}}{\rotatebox{53}{$\pccpol$}} 
& \multicolumn{1}{p{0.8cm}}{\rotatebox{53}{$\pairpol$}} \\
\midrule
satr & -- & \qqncor{0.496} & \qqpcor{0.425} & \qqpcor{0.163} & \qqpcor{0.032} & \qqpcor{0.115} & \qqpcor{0.644} & \qqncor{0.345} & \qqncor{0.121} & \qqncor{0.111} & \qqpcor{0.132} & \qqpcor{0.149} & \qqpcor{0.294} \\
app\_agr & \qqncor{0.461} & -- & \qqncor{0.143} & \qqpcor{0.177} & \qqpcor{0.320} & \qqpcor{0.100} & \qqncor{0.202} & \qqpcor{0.211} & \qqncor{0.057} & \qqpcor{0.049} & \qqncor{0.247} & \qqncor{0.302} & \qqncor{0.490} \\
cntr\_agr & \qqpcor{0.374} & \qqpcor{0.095} & -- & \qqpcor{0.554} & \qqpcor{0.399} & \qqpcor{0.350} & \qqpcor{0.580} & \qqncor{0.567} & \qqncor{0.328} & \qqncor{0.251} & \qqpcor{0.094} & \qqpcor{0.100} & \qqpcor{0.196} \\
pair\_agr & \qqpcor{0.243} & \qqpcor{0.328} & \qqpcor{0.909} & -- & \qqpcor{0.709} & \qqpcor{0.453} & \qqpcor{0.397} & \qqncor{0.437} & \qqncor{0.413} & \qqncor{0.157} & \qqpcor{0.080} & \qqpcor{0.042} & \qqpcor{0.044} \\
pcc\_agr & \qqpcor{0.106} & \qqpcor{0.480} & \qqpcor{0.690} & \qqpcor{0.816} & -- & \qqpcor{0.650} & \qqpcor{0.281} & \qqncor{0.288} & \qqncor{0.531} & \qqncor{0.163} & 0.009 & \qqncor{0.052} & -0.007 \\
pcc+\_agr & \qqpcor{0.173} & \qqpcor{0.242} & \qqpcor{0.613} & \qqpcor{0.691} & \qqpcor{0.855} & -- & \qqpcor{0.422} & \qqncor{0.555} & \qqncor{0.827} & \qqncor{0.421} & \qqpcor{0.279} & \qqpcor{0.270} & \qqpcor{0.302} \\
jacc\_agr & \qqpcor{0.812} & \qqncor{0.127} & \qqpcor{0.689} & \qqpcor{0.656} & \qqpcor{0.516} & \qqpcor{0.603} & -- & \qqncor{0.592} & \qqncor{0.421} & \qqncor{0.318} & \qqpcor{0.219} & \qqpcor{0.195} & \qqpcor{0.287} \\
cntr\_div & \qqncor{0.392} & \qqpcor{0.198} & \qqncor{0.668} & \qqncor{0.626} & \qqncor{0.420} & \qqncor{0.699} & \qqncor{0.707} & -- & \qqpcor{0.636} & \qqpcor{0.469} & \qqncor{0.477} & \qqncor{0.490} & \qqncor{0.464} \\
pcc\_div & \qqncor{0.174} & \qqncor{0.115} & \qqncor{0.495} & \qqncor{0.576} & \qqncor{0.667} & \qqncor{0.913} & \qqncor{0.577} & \qqpcor{0.832} & -- & \qqpcor{0.498} & \qqncor{0.366} & \qqncor{0.407} & \qqncor{0.363} \\
out\_div & \qqncor{0.134} & \qqpcor{0.040} & \qqncor{0.374} & \qqncor{0.371} & \qqncor{0.295} & \qqncor{0.503} & \qqncor{0.398} & \qqpcor{0.642} & \qqpcor{0.609} & -- & \qqncor{0.222} & \qqncor{0.252} & \qqncor{0.280} \\
cntr\_pol & \qqpcor{0.181} & \qqncor{0.379} & \qqncor{0.059} & \qqncor{0.059} & \qqncor{0.110} & \qqpcor{0.353} & \qqpcor{0.284} & \qqncor{0.649} & \qqncor{0.567} & \qqncor{0.461} & -- & \qqpcor{0.613} & \qqpcor{0.497} \\
pcc\_pol & \qqpcor{0.180} & \qqncor{0.399} & \qqncor{0.028} & \qqncor{0.081} & \qqncor{0.151} & \qqpcor{0.354} & \qqpcor{0.276} & \qqncor{0.669} & \qqncor{0.604} & \qqncor{0.504} & \qqpcor{0.929} & -- & \qqpcor{0.619} \\
pair\_pol & \qqpcor{0.257} & \qqncor{0.516} & \qqpcor{0.030} & \qqncor{0.081} & \qqncor{0.125} & \qqpcor{0.360} & \qqpcor{0.295} & \qqncor{0.640} & \qqncor{0.569} & \qqncor{0.444} & \qqpcor{0.817} & \qqpcor{0.912} & -- \\
\bottomrule
\end{tabular}
\caption{Correlations between our features based on the values for elections in the main dataset. The values in the top-right triangle are Kendall's $\tau$ correlation coefficients, while those in the bottom-left triangle are Pearson's $\rho$ correlation coefficients.}
\label{tab:correlations}
\end{table*}

\subsection{Correlations}

In \Cref{tab:correlations} we report the Pearson's $\rho$ and Kendall's $\tau$ correlation coefficients for each pair of our indices.

We observe that agreement indices
(except for Approval Agreement),
diversity indices,
and polarization indices all form relatively well correlated clusters which confirms our intuitions that they are measuring more or less the same features.
Across these clusters there is usually either negative correlation or a lack of correlation.
The only exception is the fact that
Jaccard and PCC$^+$ Pairwise Agreement indices
are slightly correlated with the polarization indices.
This is due to the fact that
Jaccard and PCC$^+$ Pairwise Agreement indices capture the local agreement, which means that the presence of voter groups that have similar preference inside each group,
but possibly very dissimilar across the groups,
increases the value of these two agreement indices.
However, the same structures also increase the polarization indices.

Finally, we note that the Saturation index is not significantly correlated with any other index, except for Jaccard Pairwise Agreement.
This can be explained by noting that these are the only two indices that we consider that do not treat approvals and disapprovals symmetrically.

\section{Detailed Datasets Description}
\label{app:dataset}
In this appendix we provide a complete description of the data used to produce \Cref{fig:synthetic_map,fig:real_life_map}.
We begin by providing details on exact distributions used to generate synthetic elections as well as sources of real-world data and the methods of its preprocessing.
We conclude with \Cref{tab:all_data},
in which for each election in the dataset
we provide the values of all considered features
as well as its icon akin to those in \Cref{tab:index-values}.

\subsection{Synthetic Data}
Let us start by describing elections generated by us using synthetic distributions.

\paragraph{Compass}
Compass elections include exactly those 
that are listed in \Cref{tab:index-values},
one copy each.

\paragraph{Impartial Culture}
We generated ten IC elections with 100 candidates and 1000 voters, two for each value of $p = \{0.1,0.3,0.5,0.7,0.9\}$.

\paragraph{Lin-IC}
We generated five Lin-IC elections with 100 candidates and 1000 voters each, as described in \Cref{sec:election_data}.

\paragraph{Resampling}
We generated 50 resampling elections drawing $\phi$ and $p$ uniformly at random from $[0,1]$,
each with 100 candidates and 1000 voters.

\paragraph{N(2-Party)}
We generated 25 noisy 2-party elections, each with 100 voters and 100 candidates. We sampled $\phi$ uniformly at random from $[0,1]$.

\paragraph{($\boldsymbol{x}$,$\boldsymbol{y}$)-2-Party}
We generated 25 ($x$,$y$)-2-Party elections, with 100 voters and 100 candidates each.
For each we chose $x$ and $y$ uniformly at random and independently from $[0,1]$.

\paragraph{Party-List}
We generated 25 uneven party-list elections, each with 100 voters and 100 candidates, in the following way.
First we sampled the number of parties $k$
from a Poisson distribution with $\lambda=1$
and added $3$, so that we have at least $3$ parties.
Next, we uniformly at random chose two partitions of 100 into $k$ numbers.
One of the partitions was then used as the number of voters in each party, while the other was used as the number of candidates.

\paragraph{ID Mixture}
For $p \in [0,1]$ and integer $k$,
in $k,p$-ID Mixture model,
we first split the voters randomly into $k$ groups.
Then, each group has identical approvals
drawn from $p$-IC.
We generated 20 ID Mixture elections,
each with 100 candidates and 1000 voters,
one for each combination of $k \in \{2, 3, 4, 5\}$
and $p \in \{0.1, 0.3, 0.5, 0.7, 0.9\}$.

\paragraph{IAM Mixture}
For a given $k$, we generate an IAM Mixture election by first splitting the voters randomly into $k$ groups,
then for each group and candidate $c$
drawing $p_c$ uniformly at random from $[0,1]$,
and finally sampling a sub-election for each group
using the Independent Approval Model
described in \Cref{sec:election_data}.
We generated 20 IAM Mixture elections, each with 100 candidates and 1000 voters,
five for each $k \in \{2,3,4,5\}$.

\paragraph{2D-Euclidean}
We sampled 50 election from the 2D Euclidean model,
replicating first five distributions from \cite{god-bat-sko-fal:c:2d} with 10 elections in each (each with 100 candidates and 1000 voters). Specifically, we assumed that both voters and candidates are sampled uniformly from $[0,1]\times [0,1]$ square, and then:
\begin{itemize}
    \item each voter $i$ approves all candidates in radius $r=0.117$ from $i$'s position,
    \item each voter $i$ approves all candidates in radius $r=0.167$ from $i$'s position,
    \item each voter $i$ approves all candidates in radius $r_i$ from $i$'s position, where $r_i$ is drawn uniformly from $[0,0.5]$,
    \item each voter $i$ approves $10$ closest candidates, or
    \item each voter $i$ approves some number of closest candidates, that is drawn uniformly from $\{1,2,\dots,100\}$.
\end{itemize}

\subsection{Real-World Elections}
Let us now cover elections that are based on real-world data. We include elections from Preflib~\cite{mat-wal:c:preflib}, Pabulib~\cite{fal-fli-pet-pie-sko-sto-szu-tal:c:pabulib}, as well as a number of datasets that we have gathered or converted ourselves.

\paragraph{Chopin}
These elections are based on the XIX International Chopin Piano Competition. Each election is based on one of four stages of the competition. In each, every member of the jury gives a score from 1 to 25 to every participant of a given stage (except for a relatively rare case in which the participant is a student of the member of the jury). The jury is fixed and consists of 17 members, which are voters in our election, and the number of participants, i.e., candidates, decreases with each stage (starting from 84 in the first stage and ending at 11 in the final stage) as only the best candidates pass to the next stage. For each stage, we converted a score above 20 to an approval.

\paragraph{French Elections}
Each of these datasets consists of exit-poll surveys conducted during the French presidential elections of 2007 \cite{Baujard2025:french-elections-2007}, 2012 \cite{Baujard2025:french-elections-2012}, 2017 \cite{bou-bla-bau-dur-ige-lan-lar-las-leb-mer:m:french-elections-2017}, and 2022 \cite{Delemazure2024-tc-FrenchElections2022}.

\paragraph{Jester}
The Jester approval elections were derived from a Kaggle dataset%
\footnote{\url{https://www.kaggle.com/datasets/aakaashjois/jester-collaborative-filtering-dataset}} in which users assigned numerical scores to jokes~\cite{Goldberg2001-pc-Jester}.
In order to convert the data into approval profile, jokes were ranked according to their scores, with unranked jokes placed randomly at the bottom of the ranking. A fixed top percentage of jokes was then labeled as approved candidates, while the remaining jokes were considered unapproved.

\paragraph{MovieLens}
This dataset is a subset of MovieLens 20M Dataset that can be found on Kaggle~\cite{har-kon:j:movielens}.%
\footnote{\url{https://www.kaggle.com/datasets/grouplens/movielens-20m-dataset}}
The subset consists of 50 films (candidates) and 423 users (voters) each of which rates each film with a score from $\{1.0, 1.5, 2.0, \dots, 5.0\}$.
We have converted this subset into three elections by setting an approval for every score that is equal or above $4.0$, $4.5$, or $5.0$.

\paragraph{Networks}
We have included seven elections that have been converted from canonical network datasets.
In each of them we have treated the set of nodes as both voters and candidates with an approval corresponding to an edge from a  voter to a candidate.
Additionally, each node approves itself.

Three networks have been taken from the Mark Newman's collection of network datasets.\footnote{\url{https://websites.umich.edu/~mejn/netdata/}}
These include American College Football Network~\cite{GirNew-2002-CommunityDetection},
Dolphin Social Network~\cite{LusSchBoiHaaSloEtal-2003-DolphinNetwork}, and
Political Blogs Network~\cite{AdaGla-2005-PolBlogs}.
The next four network has been taken from the predefined datasets in the Editor4Centralities online tool.\footnote{\url{https://centrality.mimuw.edu.pl/editor/}}
These include September 11 Attack Terrorist Network~\cite{Kre-2002-September11Network},
Wood-Processing Facility Strike Network~\cite{Mic-1997-StrikeNetwork},
Zachary's Karate Club Network~\cite{Zac-1977-KarateNetwork}, and
Les Miserables Character Network~\cite{Knu-1993-LesMiserablesNetwork}.

\paragraph{Polkadot}
We sampled 10 elections from the Polkadot dataset on Preflib.
Each election represents votes of the stakeholders in the Polkadot blockchain over the set of validators in the Nominated Proof-of-Stake (NPoS) protocol
\cite{boe-bri-cev-geh-san-sch:c:polkadot}.

\paragraph{Our Surveys}
We collected these datasets in an online survey filled predominantly by students. Voters were asked to mark their like or dislike for each of the candidates which then translated to approvals and disapprovals. No restrictions regarding the vote length were imposed.
The topics of the surveys include
colors, fruits, pets, pizza types, and school subjects (each topic corresponds to one election).

\paragraph{Bidding Data}
We include six elections based on bids of reviewers over the papers which they want to review in computer science conferences.
All of the data comes from Preflib.
Three elections comes from AAMAS 2015, 2016, and 2021 editions
and three come from unspecified conferences.
In each case, we have interpreted "yes" bids as an approval, and all other responses (and lack thereof) as a lack of approval.

\paragraph{Campsongs}
These four elections are surveys in which the participants had to state their preferences over CCM-genre music pieces. The data is available on Preflib.

\paragraph{CTU}
This is an approval election over the time of tutorial at the Czech Technical University in Prague.

\paragraph{Eurovision}
We include a sample of 20 election based on the Eurovision Song Contest from the dataset available on Preflib~\cite{boe:phd}. Originally the data consists of rankings over most preferred participants as seen by each member of the jury (the number of ranked participants is fixed each time). We have converted these top-truncated ordinal elections to approval elections, by setting an approval each for each ranked participant.

\paragraph{Posters}
These are two approval elections from the San Sebastian Poster Competition available at Preflib. Two elections were held over two groups of posters (A group) and (B group).

\paragraph{Sushi}
These are two elections based on two surveys on the favorite kinds of sushi~\cite{kam:c:sushi}. In one of them the participants had to indicate most preferred 10 types from the total of 100 types and provide their strict ranking. We have converted this data into an approval election by setting an approval for each of the selected 10 types. In the other survey the participants were allowed to provide a weak ordering of the 10 chosen types (with ties). Here, we included only the kinds that were tied at the first place.

\paragraph{Pol.is}
The data comes from online deliberation platform Pol.is, in which the users can express their support, neutrality, or disapprovement over the comments posted by other users. The data is incomplete, as not every user has to react to every comment. We have transformed all datasets of this type available at Preflib into approval elections by recognizing the support as an approval.

\paragraph{Pabulib}
Finally, we include a sample of participatory budgeting elections from Pabulib~\cite{fal-fli-pet-pie-sko-sto-szu-tal:c:pabulib}.
In each such election,
citizens of a given municipality or district
might express their preferences over the projects that can be implemented in their neighborhood.
Most of the elections that we included are natively approval elections
(the only exception is Toulouse 2019 election
in which the voters could distribute 7 points between the projects---we treated each positive amount of points as an approval).
However, the projects are associated with costs that could impact the voters decisions, and which are ignored for our purposes.
We include five elections from each of the following five cities: Amsterdam, Budapest, Łódź, Toulouse, Warsaw.
The elections included are the ones that have the highest ``quality'' score on the Pabulib webpage.
Apart from that, we included all elections from the districts of Warsaw in 2026.

\subsection{Index Values}
The following is the table with values of our indices for all elections in our main dataset.
For each election we present an icon representing its approval matrix, akin to those presented in \Cref{tab:index-values}.
If the election had more than $500$ candidates (or voters),
we have sampled $500$ candidates (or voters)
for the purpose of drawing the icon.
For most of the elections we ordered the rows (and columns) in the icon based on the Jaccard distances between the votes (and candidates).
Based on the distances we computed the tree of hierarchical clustering and ordered the rows (and columns)
according to the ordering of the leaves in the the tree~\cite{eis-spe-bro-bot:j:clustering-ordering}.

{
\setlength{\tabcolsep}{1.5pt}
\footnotesize

}
\end{document}